\let\@fnsymbol\@arabic
\DeclareMathOperator{\Th}{Th}
\DeclareMathOperator{\Pal}{Pal}
\DeclareMathOperator{\tabb}{TAB}
\def\Enn{{\mathbb{N}}}
\def\Zee{{\mathbb{Z}}}
\def \nodiv{{\, |\kern-4.5pt/}\, }
\def\modd#1 #2{#1\ ({\rm mod}\ #2)}
\newcommand{\st}{\;:\;}
\def\Ddots{\mathinner{\mkern1mu\raise\p@
\vbox{\kern7\p@\hbox{.}}\mkern2mu
\raise4\p@\hbox{.}\mkern2mu\raise7\p@\hbox{.}\mkern1mu}}
\author{Hamoon Mousavi \& Jeffrey Shallit\thanks{School of Computer Science,
University of Waterloo,
Waterloo,  ON  N2L 3G1,
Canada; \newline
{\tt sh2mousa@uwaterloo.ca},
        {\tt shallit@uwaterloo.ca} .}}
\title{Mechanical Proofs of Properties of the Tribonacci Word}
\begin{document}

\maketitle

\theoremstyle{plain}
\newtheorem{theorem}{Theorem}
\newtheorem{corollary}[theorem]{Corollary}
\newtheorem{almosttheorem}[theorem]{(Almost) Theorem}
\newtheorem{lemma}[theorem]{Lemma}
\newtheorem{proposition}[theorem]{Proposition}

\theoremstyle{definition}
\newtheorem{definition}[theorem]{Definition}
\newtheorem{example}[theorem]{Example}
\newtheorem{conjecture}[theorem]{Conjecture}
\newtheorem{openproblem}[theorem]{Open Problem}
\newtheorem{proc}[theorem]{Procedure}

\theoremstyle{remark}
\newtheorem{remark}[theorem]{Remark}

\begin{abstract}
We implement a decision procedure for answering questions about
a class of infinite words that might be called (for lack of a better
name) ``Tribonacci-automatic''.  This class includes, for example,
the famous Tribonacci word ${\bf T} = 0102010010201\cdots$,
the fixed point of the 
morphism $0 \rightarrow 01$,
$1 \rightarrow 02$, $2 \rightarrow 0$.    We use it to reprove some
old results
about the Tribonacci
word from the literature,
such as assertions about the
occurrences in $\bf T$ of squares, cubes, palindromes, and so forth.
We also obtain some new results.
\end{abstract}

Note:  some sections of this paper have been taken, more or less
verbatim, from another preprint of the authors and C. F. Du and
L. Schaeffer \cite{Du&Mousavi&Schaeffer&Shallit:2014}.

\section{Decidability}
\label{decide}

As is well-known, the logical theory $\Th(\Enn,+)$, sometimes called
Presburger arithmetic, is decidable \cite{Presburger:1929,Presburger:1991}.
B\"uchi \cite{Buchi:1960} showed that if we add
the function $V_k(n) = k^e$, for some fixed integer $k \geq 2$,
where $e = \max \{ i \ : \ k^i \, | \, n \}$,
then the resulting theory is still decidable. 
This theory is powerful enough to define finite automata;
for a survey, see \cite{Bruyere&Hansel&Michaux&Villemaire:1994}.  

As a consequence, we have the following theorem 
(see, e.g., \cite{Shallit:2013}):
\begin{theorem}
There is an algorithm that, given a proposition phrased using only the
universal and existential quantifiers, indexing into one or more $k$-automatic
sequences, addition, subtraction, logical operations, and
comparisons, will decide the truth of that proposition.
\label{one}
\end{theorem}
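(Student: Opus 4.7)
The plan is to reduce any such proposition to a sentence of the Büchi-Presburger theory $\Th(\Enn,+,V_k)$, whose decidability is already cited, and then invoke that decision procedure.

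First, I would fix notation for the input: finitely many $k$-automatic sequences $a^{(1)},\ldots,a^{(r)}$, each over a finite alphabet $\Sigma_j$, together with a formula $\varphi$ built from quantifiers over $\Enn$, $+$, $-$, $<$, $=$, the Boolean connectives, and atomic terms of the form $a^{(j)}[t]$ where $t$ is a Presburger term. The target is an equivalent sentence in $\Th(\Enn,+,V_k)$.

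Next, I would invoke the Büchi--Bruyère characterization (cited via \cite{Bruyere&Hansel&Michaux&Villemaire:1994}) that a set $S\subseteq\Enn$ is $k$-automatic (i.e., the set of base-$k$ representations of elements of $S$ is regular) if and only if $S$ is definable in $\Th(\Enn,+,V_k)$. Since each $a^{(j)}$ is $k$-automatic, for every letter $c\in\Sigma_j$ the preimage $L^{(j)}_c=\{n\in\Enn \st a^{(j)}_n=c\}$ is a $k$-automatic subset of $\Enn$, so there is a first-order formula $\psi^{(j)}_c(n)$ of $\Th(\Enn,+,V_k)$ that defines $L^{(j)}_c$. An atomic subformula of the shape $a^{(j)}[t]=c$ can then be replaced by $\psi^{(j)}_c(t)$, and a comparison $a^{(j)}[s]=a^{(j')}[t]$ can be rewritten as the finite disjunction $\bigvee_{c\in \Sigma_j\cap\Sigma_{j'}}\bigl(\psi^{(j)}_c(s)\AND\psi^{(j')}_c(t)\bigr)$. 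Since subtraction is expressible in Presburger arithmetic (guarded by the nonnegativity constraint implicit in $\Enn$), the remaining logical scaffolding is already in the theory. Applying this translation uniformly gives an equivalent sentence of $\Th(\Enn,+,V_k)$, whose truth is decidable by Theorem~1's cited predecessors.

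The only real obstacle is the Büchi--Bruyère equivalence: both the easy direction (every definable set is $k$-automatic, via closure of automatic sets under Boolean operations, projection, and the definability of addition and $V_k$ through synchronized automata reading base-$k$ digits in parallel) and the harder converse (that any $k$-automatic set can be described by a formula using $V_k$). I would cite this equivalence rather than reproving it, since it is the content of the survey \cite{Bruyere&Hansel&Michaux&Villemaire:1994}; everything else in the reduction is syntactic. An alternative, algorithmically more direct plan is to dispense with the formula translation altogether and instead compile $\varphi$ bottom-up into a finite automaton reading the tuple of base-$k$ representations of its free variables, using product, complementation, and projection for the Boolean connectives and quantifiers; the sentence is then true iff the resulting automaton accepts. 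Either route yields the desired decision procedure.
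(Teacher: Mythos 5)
Your proposal is correct and follows exactly the route the paper intends: the paper gives no explicit proof, deriving the theorem "as a consequence" of the decidability of $\Th(\Enn,+,V_k)$ and the B\"uchi--Bruy\`ere definability of $k$-automatic sets, with a pointer to \cite{Shallit:2013}. Your alternative bottom-up compilation into automata is likewise the standard argument in that reference (and is what the authors' software actually implements), so there is nothing to add.
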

Here, by a $k$-automatic sequence, we mean a sequence $\bf a$
computed by deterministic finite automaton with
output (DFAO) $M = (Q, \Sigma_k, \Delta, \delta, q_0, \kappa) $.
Here $\Sigma_k := \lbrace 0,1,\ldots, k-1 \rbrace$ is the input 
alphabet,
$\Delta$ is the output alphabet,
and outputs are associated with the states given by the map
$\kappa:Q \rightarrow \Delta$ in the following manner:  if $(n)_k$ denotes
the canonical expansion of $n$ in base $k$, then
${\bf a}[n] = \kappa(\delta(q_0, (n)_k))$.  The prototypical example of
an automatic sequence is the Thue-Morse sequence
${\bf t} = t_0 t_1 t_2 \cdots$, the fixed point (starting with $0$) of
the morphism $0 \rightarrow 01$, $1 \rightarrow 10$.

It turns out that many results in the literature about properties of automatic
sequences, for which some had only
long and involved proofs, can be proved purely mechanically using a decision
procedure.
It suffices to express the property as an appropriate logical
predicate, convert the predicate into an automaton accepting
representations of integers for which the predicate is true, and
examine the automaton.
See, for example, the recent papers
\cite{Allouche&Rampersad&Shallit:2009,Goc&Henshall&Shallit:2012,Goc&Saari&Shallit:2013,Goc&Mousavi&Shallit:2013,Goc&Schaeffer&Shallit:2013}. 
Furthermore, in many cases we can explicitly enumerate various aspects
of such sequences, such as subword complexity
\cite{Charlier&Rampersad&Shallit:2012}.

Beyond base $k$, more exotic numeration systems are known, and one
can define automata taking representations in these systems as input.
It turns out that in the so-called Pisot numeration systems, addition
is computable \cite{Frougny:1992a,Frougny&Solomyak:1996},
and hence a theorem analogous to Theorem~\ref{one} holds
for these systems.  See, for example, \cite{Bruyere&Hansel:1997}.  
It is our contention that the power of this approach has not been
widely appreciated, and that
many results, previously proved using long and involved ad hoc techniques,
can be proved with much less effort by phrasing them as logical predicates
and employing a decision procedure.  Furthermore, many enumeration questions
can be solved with a similar approach.

In a previous paper, we explored the consequences of a decision algorithm
for Fibonacci representation \cite{Du&Mousavi&Schaeffer&Shallit:2014}.
In this paper we discuss 
our implementation of an analogous algorithm 
for Tribonacci representation.   
We use it to
reprove some old results from the literature purely mechanically, as well as
obtain some new results.

For other works on using computerized formal methods to prove
theorems see, for example, \cite{Hales:2008,Konev&Lisitsa:2014}.

\section{Tribonacci representation}
\label{fibrep}

Let the Tribonacci numbers be defined, as usual, by the linear
recurrence
$T_n = T_{n-1} + T_{n-2}+ T_{n-3}$ for $n \geq 3$
with initial values
$T_0 = 0$, $T_1 = 1$, $T_2 = 1$.
(We caution the reader that some authors use a
different indexing for these numbers.)  
Here are the first few values of this sequence.
\begin{table}[H]
\begin{center}
\begin{tabular}{c|ccccccccccccccccc}
$n$ & 0 & 1 & 2 & 3 & 4 & 5 & 6 & 7 & 8 & 9 & 10 & 11 & 12 & 13 & 14 & 15 & 16 \\
\hline
$T_n$ & 0& 1& 1& 2& 4& 7& 13& 24& 44& 81& 149& 274& 504& 927& 1705& 3136& 5768 \\
\end{tabular}
\end{center}
\end{table}
From the theory of linear recurrences we know that 
$$T_n = c_1 \alpha^n + c_2 \beta^n + c_3 \gamma^n$$
where $\alpha, \beta, \gamma$ are the zeros of the
polynomial $x^3 - x^2 -x - 1$.  The only real zero is
$\alpha \doteq 1.83928675521416113255185$; the other two zeros
are complex and are of magnitude $<3/4$.
Solving for the constants, we find that
$c_1 \doteq 0.336228116994941094225362954$, the real 
zero of the polynomial $44x^3-2x-1=0$.
It follows that $T_n = c_1 \alpha^n + O(.75^n)$.
In particular $T_n/T_{n-1} = \alpha + O(.41^n)$.

It is well-known that every 
non-negative integer can be represented, in an essentially unique
way, as a sum of Tribonacci numbers $(T_i)_{i\geq 2}$,
subject to the constraint that no three consecutive Tribonacci numbers
are used 
\cite{Carlitz&Scoville&Hoggatt:1972}.
For example, $43 = T_7 + T_6 + T_4 + T_3$.

Such a representation can be written as a binary word
$a_1 a_2 \cdots a_n$ representing
the integer
$\sum_{1 \leq i \leq n} a_i T_{n+2-i}$.  For example,
the binary word $110110$ is the Tribonacci representation of $43$.

For $w = a_1 a_2 \cdots a_n \in \Sigma_2^*$, we
define $[a_1 a_2 \cdots a_n]_T := \sum_{1 \leq i \leq n} a_i T_{n+2-i}$,
even if $a_1 a_2 \cdots a_n$ has leading zeros or occurrences of the
word $111$.

By $(n)_T$ we mean the {\it canonical} Tribonacci representation for
the integer $n$, having no leading zeros or occurrences of
$111$.  Note
that $(0)_T = \epsilon$, the empty word.  The language of all
canonical representations of elements of $\Enn$ is 
$\epsilon + (1 + 11)(0+01+011)^*$.

Just as Tribonacci representation is an analogue of base-$k$ representation,
we can define the notion of {\it Tribonacci-automatic sequence} as the
analogue of the more familiar notation of $k$-automatic sequence
\cite{Cobham:1972,Allouche&Shallit:2003}.  We say that an infinite word
${\bf a} = (a_n)_{n \geq 0}$ is Tribonacci-automatic if there exists an
automaton with output $M = (Q, \Sigma_2, q_0, \delta, \kappa, \Delta)$
that $a_n = \kappa(\delta(q_0, (n)_T))$ for all $n \geq 0$.  An example
of a Tribonacci-automatic sequence is the infinite Tribonacci word,
$${\bf T} = T_0 T_1 T_2 \cdots =  0102010010201\cdots$$
which is generated by the following 3-state automaton:

\begin{figure}[H]
\begin{center}
\includegraphics[width=4.5in]{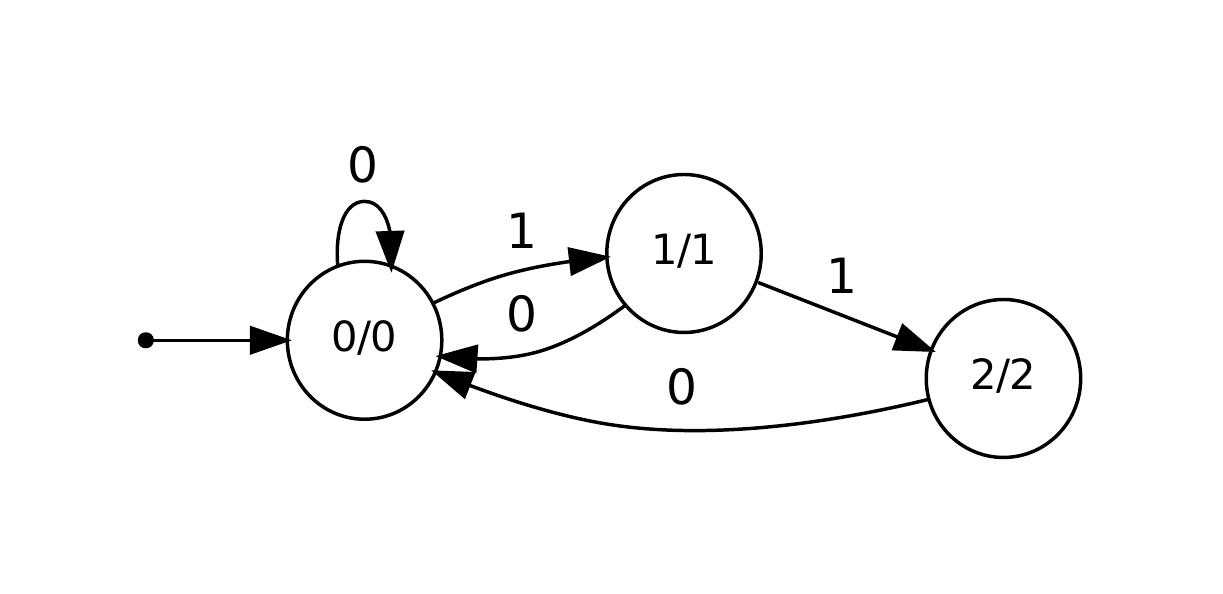}
\caption{Automaton generating the Tribonacci sequence}
\label{tribonaccis}
\end{center}
\end{figure}

To compute $T_i$, we express $i$ in canonical Tribonacci representation,
and feed it into the automaton.  Then $T_i$ is the output associated with
the last state reached (denoted by the symbol after the slash).

A basic fact about Tribonacci representation is that addition can
be performed by a finite automaton.  To make this precise, we need to
generalize our notion of Tribonacci representation to $r$-tuples of
integers for $r \geq 1$.  A representation for $(x_1, x_2,\ldots, x_r)$
consists of a string
of symbols $z$ over the alphabet $\Sigma_2^r$, such that the projection
$\pi_i(z)$ over the $i$'th coordinate gives a Tribonacci representation
of $x_i$.  Notice that since the canonical
Tribonacci representations of the individual $x_i$
may have different lengths, padding with leading zeros will often
be necessary.  A representation for $(x_1, x_2, \ldots, x_r)$ is called
canonical if it has no leading $[0,0,\ldots 0]$ symbols and the projections
into individual coordinates have no occurrences of $111$.  We write the
canonical representation 
as $(x_1, x_2, \ldots, x_r)_T$.  Thus,
for example, the canonical representation for $(9,16)$ is
$[0,1][1,0][0,0][1,1][0,1]$.

Thus, our claim about addition in Tribonacci representation is that there
exists a 
deterministic finite automaton (DFA) $M_{\rm add}$
that takes input words of the form $[0,0,0]^* (x,y,z)_T$,
and accepts if and only if $x +y =z$.
Thus, for example, $M_{\rm add}$ accepts $[1,0,1][0,1,1][0,0,0]$
since the three words obtained by projection are $100$, $010$, and $110$,
which represent, respectively, $4$, $2$, and $6$ in Tribonacci representation.

Since this automaton does not appear to have been given explicitly in
the literature and it is essential to our implementation,
we give it here.  This automaton
actually works even for non-canonical expansions having
three consecutive $1$'s.
The initial state is state $1$.
The state $0$ is a ``dead
state'' that can safely be ignored.

We briefly sketch a proof of the correctness of this automaton.  
States can be identified with certain sequences, as follows:
if $x,y,z$ are the identical-length words arising from projection
of a word that takes $M_{\rm add}$ from the initial state $1$ to the
state $t$, then $t$ is identified with the integer sequence
$([x0^n]_T + [y0^n]_T - [z0^n]_T)_{n \geq 0}$.    State $0$ corresponds
to sequences that can never lead to $0$, as they are too positive or
too negative.

When we intersect this automaton with the appropriate regular language
(ruling out input triples containing $111$ in any coordinate), we get
an automaton with 149 states accepting $0^*(x,y,z)_T$ such that
$x + y = z$.

Another basic fact about Tribonacci representation is that, for
canonical representations containing no three consecutive $1$'s or
leading zeros, the radix order on representations is the same 
as the ordinary ordering on $\Enn$.    It follows that a very
simple automaton can, on input $(x,y)_T$, decide whether $x < y$.

Putting this all together, we get the analogue of
Theorem~\ref{one}:

\begin{proc}[Decision procedure for Tribonacci-automatic words] \label{proc:Fib-auto-decide} \ \\
{\bf Input:} \begin{itemize}
	\item $m,n \in \Enn$;
	\item $m$ DFAOs generating the Tribonacci-automatic words ${\bf w}_1,{\bf w}_2,\dots,{\bf w}_m$;
	\item a first-order proposition with $n$ free variables $\varphi(v_1,v_2,\dots,v_n)$ using constants and relations definable in $\Th(\Enn,0,1,+)$ and indexing into ${\bf w}_1,{\bf w}_2,\dots,{\bf w}_m$. 
	\end{itemize}
{\bf Output:} DFA with input alphabet $\Sigma_2^n$ accepting $\{ (k_1,k_2,\dots,k_n)_T \st \varphi(k_1,k_2,\dots,k_n) \text{ holds} \}$.
\end{proc}

\begin{table}[H]
\begin{center}
\begin{tabular}{c|cccccccc|c}
$q$ & [0,0,0] & [0,0,1] & [0,1,0] & [0,1,1] & [1,0,0] & [1,0,1] & [1,1,0] & [1,1,1] & acc/rej\\
\hline
0& 0& 0& 0& 0& 0& 0& 0& 0& 0\\
1& 1& 2& 3& 1& 3& 1& 0& 3& 1\\
2& 4& 0& 5& 4& 5& 4& 6& 5& 0\\
3& 0& 7& 0& 0& 0& 0& 0& 0& 0\\
4& 0& 0& 0& 0& 0& 0& 8& 0& 0\\
5& 9& 0&10& 9&10& 9&11&10& 0\\
6&12&13& 0&12& 0&12& 0& 0& 1\\
7& 0&14& 0& 0& 0& 0& 0& 0& 0\\
8& 0& 0& 9& 0& 9& 0&10& 9& 0\\
9& 0& 0& 4& 0& 4& 0& 5& 4& 0\\
10& 2&15& 1& 2& 1& 2& 3& 1& 0\\
11& 7&16& 0& 7& 0& 7& 0& 0& 1\\
12&14&17& 0&14& 0&14& 0& 0& 1\\
13&18&19&20&18&20&18&21&20& 0\\
14& 3& 1& 0& 3& 0& 3& 0& 0& 0\\
15& 0& 0& 0& 0& 0& 0&22& 0& 0\\
16&20&18&21&20&21&20& 0&21& 1\\
17& 5& 4& 6& 5& 6& 5&23& 6& 1\\
18& 0& 0& 8& 0& 8& 0&24& 8& 0\\
19& 0& 0& 0& 0& 0& 0&25& 0& 0\\
20&10& 9&11&10&11&10& 0&11& 1\\
21& 0&12& 0& 0& 0& 0& 0& 0& 0\\
22& 0& 0&26& 0&26& 0&27&26& 0\\
23& 0&28& 0& 0& 0& 0& 0& 0& 0\\
24&13&29&12&13&12&13& 0&12& 0\\
25& 0& 0& 0& 0& 0& 0&26& 0& 0\\
26& 0& 0& 0& 0& 0& 0& 4& 0& 0\\
27&15& 0& 2&15& 2&15& 1& 2& 0\\
28& 0&30& 0& 0& 0& 0& 0& 0& 0\\
29& 0& 0&31& 0&31& 0&32&31& 0\\
30& 0& 3& 0& 0& 0& 0& 0& 0& 0\\
31& 0& 0& 0& 0& 0& 0&33& 0& 0\\
32&26& 0&27&26&27&26&34&27& 0\\
33& 0& 0& 0& 0& 0& 0& 9& 0& 0\\
34&16&35& 7&16& 7&16& 0& 7& 0\\
35&31& 0&32&31&32&31&36&32& 0\\
36&37&38&39&37&39&37& 0&39& 1\\
37&17&40&14&17&14&17& 0&14& 0\\
38&19& 0&18&19&18&19&20&18& 0\\
39& 0&41& 0& 0& 0& 0& 0& 0& 1\\
40& 0& 0&22& 0&22& 0&42&22& 0\\
41&21&20& 0&21& 0&21& 0& 0& 0\\
42&38&43&37&38&37&38&39&37& 0\\
43& 0& 0& 0& 0& 0& 0&31& 0& 0\\
\hline
\end{tabular}
\end{center}
\caption{Transition table for $M_{\rm add}$ for Tribonacci addition}
\end{table}

\section{Mechanical proofs of properties of the infinite Tribonacci word}
\label{proofsf}

Recall that a word $x$, whether finite or infinite, is said to have
period $p$ if $x[i] = x[i+p]$ for all $i$ for which this equality is meaningful.
Thus, for example, the English word ${\tt alfalfa}$ has period $3$.
The {\it exponent} of a finite word $x$, written $\exp(x)$, is
$|x|/P$, where $P$ is the smallest period of $x$.
Thus $\exp({\tt alfalfa}) = 7/3$.

If $\bf x$ is an infinite word with a finite period, we say it is
{\it ultimately periodic}.  An infinite
word $\bf x$ is ultimately periodic if and only if there are finite
words $u, v$ such that $x = uv^\omega$, where $v^\omega= vvv \cdots$.

A nonempty word of the form $xx$ is called a {\it square}, and a 
nonempty word of
the form $xxx$ is called a {\it cube}.  More generally, a nonempty word
of the form $x^n$ is called an $n$'th power.
By the {\it order} of a square $xx$,
cube $xxx$, or $n$'th power $x^n$, we mean the length $|x|$.

The infinite Tribonacci word ${\bf T} = 0102010 \cdots = T_0 T_1 T_2 \cdots$
can be described in many
different ways.  In addition to our definition in terms of automata,
it is also the fixed point of 
the morphism $\varphi(0) = 01$, $\varphi(1) = 02$, and $\varphi(1) = 0$.
This word
has been studied extensively in the literature; see, for example,
\cite{Chekhova&Hubert&Messaoudi:2001,Barcucci&Belanger&Brlek:2004,Rosema&Tijdeman:2005,Glen:2006,Tan&Wen:2007,Duchene&Rigo:2008,Richomme&Saari&Zamboni:2010,Turek:2013}.

It can also be described as the limit of the finite Tribonacci
words $(Y_n)_{n \geq 0}$, defined as follows:
\begin{align*}
Y_0 &= \epsilon \\
Y_1 &= 2 \\
Y_2 &= 0 \\
Y_3 &= 01 \\
Y_n &= Y_{n-1} Y_{n-2} Y_{n-3} \ \text{for $n \geq 4$}.
\end{align*}
Note that $Y_n$, for $n \geq 2$, is the prefix of length $T_n$ of
$\bf T$.

In the next subsection, we use our implementation to prove a variety of
results about repetitions in $\bf T$.

\subsection{Repetitions}
\label{repe-subsec}

It is known that all strict epistandard words (or Arnoux-Rauzy words),
are not ultimately periodic (see, for example, \cite{Glen&Justin:2009}).
Since $\bf T$ is in this class, we have the following known result which
we can reprove using our method.

\begin{theorem}
The word $\bf T$ is not ultimately periodic.
\end{theorem}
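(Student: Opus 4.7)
The plan is to phrase ``$\bf T$ is ultimately periodic'' as a first-order predicate in the language handled by Procedure~\ref{proc:Fib-auto-decide}, feed it to the decision procedure, and verify that the resulting automaton is empty.

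More concretely, I would encode ultimate periodicity as the formula
\[
\varphi \;:=\; \exists n_0\, \exists p\; \bigl( p \geq 1 \;\wedge\; \forall i\, (i \geq n_0 \IMPLY {\bf T}[i] = {\bf T}[i+p]) \bigr).
\]
Every ingredient in this formula is permitted by Procedure~\ref{proc:Fib-auto-decide}: the constants $0$ and $1$ and the relation $\geq$ are definable in $\Th(\Enn,0,1,+)$; addition $i+p$ is computable on Tribonacci representations via the automaton $M_{\rm add}$ constructed in Section~\ref{fibrep}; and the indexing ${\bf T}[i]$, ${\bf T}[i+p]$ is legal because $\bf T$ is Tribonacci-automatic by the 3-state DFAO in Figure~\ref{tribonaccis}.

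The procedure would then produce a DFA recognizing the set of $(n_0,p)_T$ witnessing the inner matrix of $\varphi$, and closure under projection/complementation handles the quantifiers. To prove the theorem, I would check that the final 0-ary automaton rejects (equivalently, that the automaton for $(n_0,p)$-witnesses, intersected with $p\geq 1$, has empty language). Equivalently, I could verify the negation
\[
\forall n_0 \, \forall p \; \bigl( p \geq 1 \IMPLY \exists i\, (i \geq n_0 \AND {\bf T}[i] \neq {\bf T}[i+p]) \bigr)
\]
by constructing its automaton and confirming it accepts all inputs.

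The main obstacle is not conceptual but computational: the intermediate automata produced during quantifier elimination (especially the universal quantifier over $i$, which requires complementation after a projection, forcing a subset construction on a nondeterministic machine) can blow up dramatically in size. Managing this blow-up --- using efficient minimization after each step, and ordering the quantifier eliminations carefully --- is the only real difficulty. Once the computation completes and returns the empty automaton, the theorem is established mechanically, with no further human input required.
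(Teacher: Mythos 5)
Your proposal is essentially identical to the paper's proof: the paper uses the predicate $(p \geq 1) \wedge \exists n\, \forall i \geq n\ {\bf T}[i] = {\bf T}[i+p]$ with $p$ free, runs the decision procedure, and observes that the resulting automaton accepts the empty language. Your formulation merely adds an outer existential quantifier over $p$, which is an immaterial difference, and your discussion of the computational blow-up matches the paper's experience.
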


\begin{proof}
We construct a predicate asserting that the integer
$p \geq 1$ is a period of some suffix of $\bf T$:
$$ (p \geq 1) \ \wedge \ \exists n \ \forall i \geq n\  {\bf T}[i] =
{\bf T}[i+p] . $$
(Note:  unless otherwise indicated, whenever we refer to a variable in
a predicate, the range of the variable is assumed to be
$\Enn = \lbrace 0, 1, 2, \ldots \rbrace$.)
From this predicate, using our program, we constructed
an automaton accepting the language
$$ L = 0^*\ \lbrace (p)_T \ : \ (p \geq 1) \ \wedge \ \exists n
\ \forall i \geq n \ {\bf T}[i] = {\bf T}[i+p] \rbrace .$$
This automaton accepts the empty language, and so it follows
that ${\bf T}$ is not ultimately periodic.


Here is the log of our program:
\begin{verbatim}
p >= 1 with 5 states, in 426ms
 i >= n with 13 states, in 3ms
  i + p with 150 states, in 31ms
   TR[i] = TR[i + p] with 102 states, in 225ms
    i >= n => TR[i] = TR[i + p] with 518 states, in 121ms
     Ai i >= n => TR[i] = TR[i + p] with 4 states, in 1098ms
      En Ai i >= n => TR[i] = TR[i + p] with 2 states, in 0ms
       p >= 1 & En Ai i >= n => TR[i] = TR[i + p] with 2 states, in 1ms
overall time: 1905ms
\end{verbatim}
The largest intermediate automaton during the computation had 5999 states.

A few words of explanation are in order:  here ``{\tt T}'' 
refers to the sequence
$\bf T$, and ``{\tt E}'' is our abbreviation for $\exists$ and
``{\tt A}'' is our abbreviation for $\forall$.  The symbol ``{\tt =>}'' 
is logical
implication, and ``{\tt \&}'' is logical and.
\end{proof}

From now on, whenever we discuss the language accepted by an automaton,
we will omit the $0^*$ at the beginning.  

We now turn to repetitions.  As a particular case of
\cite[Thm.~6.31 and Example 7.6, p.~130]{Glen:2006} and \cite[Example
6.21]{Glen:2007} we have the following result, which we can reprove
using our method.

\begin{theorem}
$\bf T$ contains no fourth powers.
\end{theorem}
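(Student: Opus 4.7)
The plan is to mimic the previous proof: express ``$\bf T$ contains a fourth power'' as a first-order predicate in Presburger arithmetic with indexing into $\bf T$, then invoke Procedure~\ref{proc:Fib-auto-decide} and check that the resulting automaton accepts the empty language. A factor $x^4$ of order $p \geq 1$ occurring at position $i$ is precisely a length-$4p$ block with period $p$, i.e., ${\bf T}[i+j] = {\bf T}[i+j+p]$ for every $j$ with $0 \leq j < 3p$. Since multiplication by the fixed constant $3$ is available in Presburger arithmetic (write $3p$ as $p+p+p$), the predicate is
$$\exists i \, \exists p \ (p \geq 1) \, \AND \, \forall j \, \big((j < p+p+p) \IMPLY {\bf T}[i+j] = {\bf T}[i+j+p]\big).$$

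I would build this from the inside out, following the style shown in the previous theorem's log. First construct the simple automata for $p \geq 1$ and $j < p+p+p$; then use $M_{\rm add}$ to form representations of $i+j$ and $i+j+p$; then build the two-index automaton recognizing ${\bf T}[i+j] = {\bf T}[i+j+p]$ by running the $3$-state DFAO for $\bf T$ in parallel on each index and comparing outputs. Logical implication, universal quantification over $j$, and finally existential quantification over $i$ and $p$ complete the construction.

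The main obstacle is controlling the size of the intermediate automata: each quantifier elimination entails a subset construction that can cause exponential blowup, and the predicate here juggles three Tribonacci-represented variables simultaneously together with $M_{\rm add}$'s $149$ states, so intermediate machines will likely run into the thousands of states. Nonetheless, termination is guaranteed by Procedure~\ref{proc:Fib-auto-decide}, and once the final DFA is produced the theorem reduces to a routine reachability check: if no accepting state is reachable from the initial state, then no $(i,p)$ can witness a fourth power in $\bf T$, and the theorem follows.
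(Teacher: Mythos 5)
Your overall strategy is exactly the paper's: phrase the existence of a fourth power as a first-order predicate over Tribonacci representations, run the decision procedure, and check that the resulting DFA accepts nothing. However, the specific predicate you propose, $\exists i\, \exists p\ (p\ge 1) \wedge \forall j\ (j<3p) \implies {\bf T}[i+j]={\bf T}[i+j+p]$, is precisely the ``natural'' form that the paper reports \emph{could not} be run to completion: the prover ran out of space trying to determinize an intermediate NFA with 24904 states, essentially because the word $\bf T$ is indexed by two different sums of distinct variables ($i+j$ and $i+j+p$), each requiring its own copy of the adder. The paper's actual proof replaces the offset variable by an absolute position, rewriting the condition as $(j\ge i)\wedge(j<i+3n) \implies {\bf T}[j]={\bf T}[j+n]$, so that the indexing expressions involve fewer simultaneous additions; only then does the computation finish (largest intermediate automaton 86711 states). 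Your remark that ``termination is guaranteed by the procedure'' is true only in the idealized sense; for a mechanical proof the computation actually completing \emph{is} the proof, so the variable substitution is not an optimization but the step that makes the argument go through. With that rewriting your plan coincides with the paper's.
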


\begin{proof}
A natural predicate for the orders of all fourth powers occurring
in $\bf T$:
$$(n > 0) \ \wedge \ \exists i \ \forall t<3n \  {\bf T}[i+t] = {\bf T}[i+n+t] .
$$
However, this predicate could not be run on our prover.  It runs out
of space while trying to determinize an NFA with 24904 states.

Instead, we make the substitution $j = i+t$, obtaining the new predicate
$$ (n > 0) \ \wedge \ \exists i \ \forall j\ 
	((j \geq i) \wedge (j<i+3n)) \implies  {\bf T}[j] = {\bf T}[j+n] .
$$

The resulting automaton accepts nothing, so there are no fourth powers.

Here is the log.
{\footnotesize
\begin{verbatim}
n > 0 with 5 states, in 59ms
 i <= j with 13 states, in 15ms
  3 * n with 147 states, in 423ms
   i + 3 * n with 799 states, in 4397ms
    j < i + 3 * n with 1103 states, in 4003ms
     i <= j & j < i + 3 * n with 1115 states, in 111ms
      j + n with 150 states, in 18ms
       TR[j] = TR[j + n] with 102 states, in 76ms
        i <= j & j < i + 3 * n => TR[j] = TR[j + n] with 6550 states, in 1742ms
         Aj i <= j & j < i + 3 * n => TR[j] = TR[j + n] with 4 states, in 69057ms
          Ei Aj i <= j & j < i + 3 * n => TR[j] = TR[j + n] with 2 states, in 0ms
           n > 0 & Ei Aj i <= j & j < i + 3 * n => TR[j] = TR[j + n] with 2 states, in 0ms
overall time: 79901ms
\end{verbatim}
}
The largest intermediate automaton in the computation had 86711 states.
\end{proof}

Next, we move on to a description of the orders of squares occurring
in $\bf T$.    We reprove a result of Glen \cite[\S 6.3.5]{Glen:2006}.

\begin{theorem}
All squares in $\bf T$ are of order $T_n$ or $T_n + T_{n-1}$
for some $n \geq 2$.
Furthermore, for all $n \geq 2$, there exists a square of order
$T_n$ and $T_n + T_{n-1}$ in $\bf T$.
\label{squares}
\end{theorem}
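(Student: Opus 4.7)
The plan is to use Procedure~\ref{proc:Fib-auto-decide} to build a DFA $A_{\text{sq}}$ accepting $0^*\{(n)_T : n \text{ is the order of some square factor of } {\bf T}\}$, and then compare it to a small hand-built DFA $A_{\text{tgt}}$ whose language is the canonical Tribonacci representation of the target set $\{T_n : n \geq 2\} \cup \{T_n + T_{n-1} : n \geq 3\}$. The theorem is then the regular-language equality $L(A_{\text{sq}}) = L(A_{\text{tgt}})$: its left-to-right direction is part~(a) (every square order has the prescribed form) and its right-to-left direction is part~(b) (every prescribed order is realized).

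For $A_{\text{sq}}$, I would start from the natural predicate
\[
\text{sq}(n) \;:=\; (n \geq 1)\, \wedge\, \exists i\ \forall t\ \bigl(t < n \implies {\bf T}[i+t] = {\bf T}[i+n+t]\bigr),
\]
and feed it to the implementation. As in the fourth-power proof above, this naive form is likely to overwhelm determinization because of the sum $i+t$ of quantified variables inside the $\forall t < n$, so I would apply the same substitution $j = i+t$ and instead run
\[
(n \geq 1)\, \wedge\, \exists i\ \forall j\ \bigl((j \geq i) \wedge (j < i+n) \implies {\bf T}[j] = {\bf T}[j+n]\bigr).
\]
For $A_{\text{tgt}}$ I would observe that $(T_n)_T = 10^{n-2}$ for $n \geq 2$ and $(T_n + T_{n-1})_T = 110^{n-3}$ for $n \geq 3$ (with $T_2 + T_1 = T_3$ already covered by the first family), so the target language is simply $0^*(10^* + 110^*)$, for which a DFA with a handful of states can be written down by inspection.

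To finish, I would perform the two regular-language inclusions $L(A_{\text{sq}}) \subseteq L(A_{\text{tgt}})$ and $L(A_{\text{tgt}}) \subseteq L(A_{\text{sq}})$ mechanically, using the standard complement-product-emptiness test that the framework already supports. Equivalently, the two inclusions can be phrased as logical sentences (an implication in each direction between ``$n$ is an order of a square in $\bf T$'' and ``the canonical Tribonacci representation of $n$ lies in $10^* + 110^*$'') and checked by running the decision procedure once more.

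The main obstacle I expect is state-space management during the construction of $A_{\text{sq}}$: the bounded universal quantifier $\forall t < n$ was the source of all the blowup in the fourth-power proof (an intermediate automaton with tens of thousands of states), and the square predicate has exactly the same shape. Once $A_{\text{sq}}$ is in hand, nothing conceptually new happens: comparing it to the minute DFA for $0^*(10^* + 110^*)$ is a routine equivalence check, and the two halves of the theorem fall out of that single equality.
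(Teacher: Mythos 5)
Your proposal matches the paper's proof essentially step for step: the same natural predicate for square orders, the same substitution $j = i+t$ to tame the determinization blowup, and the same identification of the resulting language as $10^* + 110^*$, i.e.\ the representations of $T_n$ and $T_n + T_{n-1}$. The only cosmetic difference is that you phrase the final step as an explicit equivalence check against a hand-built target DFA, whereas the paper simply inspects the (tiny) output automaton; both are routine and correct.
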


\begin{proof}
A natural predicate for the lengths of squares is
$$(n > 0) \ \wedge \ \exists i \ \forall t<n \  {\bf T}[i+t] = {\bf T}[i+n+t] .$$
but when we run our solver on this predicate, we get an intermediate
NFA of 4612 states that our solver could not determinize in the the
allotted space.  The problem appears to arise from the three 
different variables indexing $T$.  To get around this problem, we
rephrase the predicate, introducing a new variable $j$ that represents
$i+t$.  This gives the predicate
$$ (n>0) \ \wedge \ \exists i \ \forall j \ ((i\leq j) \wedge (j< i+n))
	\implies {\bf T}[j] = {\bf T}[j+n] .$$
and the following log
\begin{verbatim}
i <= j with 13 states, in 10ms
 i + n with 150 states, in 88ms
  j < i + n with 229 states, in 652ms
   i <= j & j < i + n with 241 states, in 42ms
    j + n with 150 states, in 19ms
     TR[j] = TR[j + n] with 102 states, in 61ms
      i <= j & j < i + n => TR[j] = TR[j + n] with 1751 states, in 341ms
       Aj i <= j & j < i + n => TR[j] = TR[j + n] with 11 states, in 4963ms
        Ei Aj i <= j & j < i + n => TR[j] = TR[j + n] with 4 states, in 4ms
         n > 0 & Ei Aj i <= j & j < i + n => TR[j] = TR[j + n] with 4 states, in 0ms
overall time: 6232ms
\end{verbatim}
The resulting automaton accepts
exactly the language $10^* + 110^*$.    
The largest intermediate automaton had 26949 states.
\end{proof}

We can easily get more information about the square
occurrences in $\bf T$.    By modifying our previous predicate, we get
$$ (n>0) \ \wedge \ \forall j \ ((i\leq j) \wedge (j< i+n))
        \implies {\bf T}[j] = {\bf T}[j+n]$$
which encodes those $(i,n)$ pairs such that there is a square of order
$n$ beginning at position $i$ of $\bf T$.

This automaton has only 10 states and efficiently encodes the
orders and starting positions of each square in $\bf T$.  During
the computation, the largest intermediate automaton had 
26949 states.
Thus we have proved

\begin{theorem}
The language
$$ \{ (i,n)_T \ : \ \text{there is a square of order $n$ beginning at
	position $i$ in {\bf T}} \}$$
is accepted by the automaton in Figure~\ref{squareorders}.
\begin{figure}[H]
\begin{center}
\includegraphics[width=6.5in]{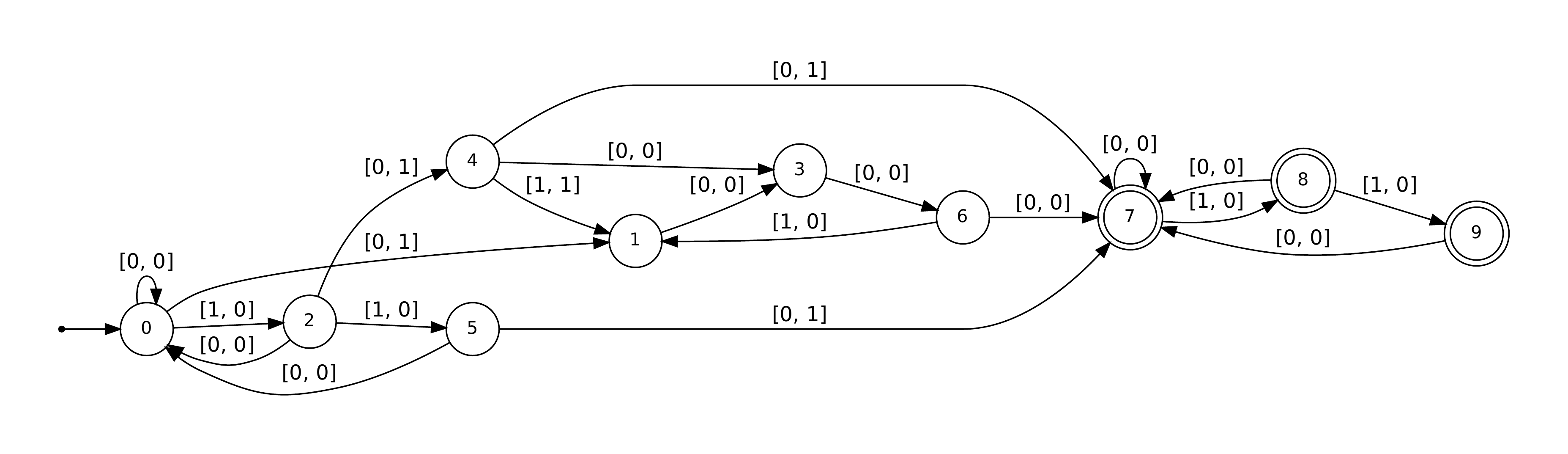}
\caption{Automaton accepting orders and positions of all squares in $\bf T$}
\label{squareorders}
\end{center}
\end{figure}
\end{theorem}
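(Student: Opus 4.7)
The plan is to apply Procedure~\ref{proc:Fib-auto-decide} to a two-free-variable predicate $\varphi(i,n)$ that asserts precisely that a square of order $n$ begins at position $i$ of $\bf T$. Adapting the rewriting trick from the proof of Theorem~\ref{squares} to avoid an intractable three-variable indexing, I would use
$$\varphi(i,n) \coloneq (n>0) \ \wedge \ \forall j \ \bigl((i \leq j) \wedge (j < i+n)\bigr) \IMPLY {\bf T}[j] = {\bf T}[j+n],$$
leaving both $i$ and $n$ free rather than existentially quantifying $i$ as in Theorem~\ref{squares}. By the semantic correctness of the decision procedure, the DFA returned will recognize exactly $\{(i,n)_T \st \varphi(i,n) \text{ holds}\}$, which is the language in question.

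The construction proceeds by induction on the structure of $\varphi$: first build the comparison automaton for $i \leq j$; then compute $i+n$ using the Tribonacci addition automaton $M_{\rm add}$ of Section~\ref{fibrep} and build $j < i+n$; form their conjunction by product construction; compute $j+n$ via $M_{\rm add}$ and, using the Tribonacci DFAO of Figure~\ref{tribonaccis}, build the equality automaton for ${\bf T}[j] = {\bf T}[j+n]$; form the implication; project out the coordinate corresponding to $j$ and determinize to realize $\forall j$; and finally intersect with the automaton for $n > 0$. Minimizing the result yields a DFA with input alphabet $\Sigma_2^2$.

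The main obstacle, as is typical for these square-related predicates, is the determinization step that eliminates $\forall j$: the intermediate NFAs obtained after projection have tens of thousands of states (the paper notes a peak near $26949$), so the implementation must handle large state spaces carefully. Once the minimal DFA has been produced, the concluding step is simply to verify that it coincides with the $10$-state machine drawn in Figure~\ref{squareorders}. Since DFA minimization produces a canonical object, this comparison is purely mechanical, and the theorem follows from the correctness of each construction step invoked along the way.
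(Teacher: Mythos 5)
Your proposal matches the paper's own argument: the paper obtains this theorem by running the decision procedure on exactly the predicate you wrote, namely $(n>0) \wedge \forall j\, ((i\leq j) \wedge (j< i+n)) \IMPLY {\bf T}[j] = {\bf T}[j+n]$ with $i$ and $n$ left free, and then reporting the resulting $10$-state automaton (largest intermediate automaton $26949$ states). The approach and even the implementation details you describe are essentially identical to the paper's.
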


Next, we examine the cubes in $\bf T$.
Evidently Theorem~\ref{squares} implies that any cube in $\bf T$
must be of order $T_n$ or $T_n + T_{n-1}$ for some $n$.
However, not every order occurs.  We thus recover the following
result of Glen \cite[\S 6.3.7]{Glen:2006}.

\begin{theorem}
The cubes in $\bf T$ are of order $T_n$ for $n \geq 5$, and a cube of
each such order occurs.
\end{theorem}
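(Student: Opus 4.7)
The plan is to follow exactly the same mechanical strategy used for squares and fourth powers in the preceding theorems. First I would write down the natural predicate asserting that a cube of order $n$ occurs in $\bf T$, namely
$$(n > 0) \ \wedge \ \exists i \ \forall t < 2n \ {\bf T}[i+t] = {\bf T}[i+n+t].$$
As observed in the proofs of the square and fourth-power theorems, running this through the solver directly is likely to produce an intermediate NFA that is too large to determinize, because three distinct variables index into $\bf T$ simultaneously. So I would apply the same trick, introducing $j = i + t$ to eliminate one of them, obtaining the equivalent predicate
$$(n > 0) \ \wedge \ \exists i \ \forall j \ \bigl((i \leq j) \wedge (j < i + 2n)\bigr) \implies {\bf T}[j] = {\bf T}[j+n].$$

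Next I would feed this predicate into the Tribonacci decision procedure (Procedure~\ref{proc:Fib-auto-decide}). By Theorem~\ref{one} (and its Tribonacci analogue just established), this yields a DFA whose accepted language, with leading zeros stripped, is exactly the set of canonical Tribonacci representations of those $n$ that are orders of cubes in $\bf T$. It then suffices to inspect this automaton and verify that it accepts precisely $1 \, 0^3 \, 0^*$, i.e., the representations of $T_5, T_6, T_7, \ldots$ (recalling that $T_n$ has canonical Tribonacci representation $1\,0^{n-2}$). The bound $n \geq 5$ is consistent with the square theorem, since it already implies that any cube order must be of the form $T_n$ or $T_n + T_{n-1}$, and the automaton will then tell us which of these actually occur.

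The main obstacle, as in the earlier proofs in this subsection, is purely a matter of state-space blowup during intermediate steps of the construction (universal quantification over $j$ in a predicate involving a bounded interval tends to produce large NFAs before minimization). The substitution $j = i + t$ is the standard workaround and it should bring the computation within feasible range, as it did for the squares and fourth-power predicates. Once the final automaton is in hand and is verified to accept $1\,0^3\,0^*$, both halves of the statement, namely that every cube order is some $T_n$ with $n \geq 5$ and that every such order is realized, follow immediately from the semantics of the predicate.
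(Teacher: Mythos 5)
Your proposal matches the paper's proof essentially verbatim: the paper uses exactly the predicate $(n>0) \wedge \exists i\, \forall j\, ((i\leq j) \wedge (j< i+2n)) \implies {\bf T}[j] = {\bf T}[j+n]$ obtained by the same substitution $j = i+t$, runs it through the decision procedure, and observes that the resulting automaton accepts $(1000)0^*$, i.e., the representations of $T_n$ for $n \geq 5$. Your reading of the output language and the appeal to the squares theorem for context are both consistent with the paper.
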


\begin{proof}
We use the predicate
$$ (n>0) \ \wedge \ \exists i \ \forall j \ ((i\leq j) \wedge (j< i+2n))
        \implies {\bf T}[j] = {\bf T}[j+n] .$$
When we run our program, we obtain an automaton accepting exactly the
language $(1000)0^*$, which corresponds to $T_n$ for $n \geq 5$.
\begin{verbatim}
\end{verbatim}
The largest intermediate automaton had 60743 states.
\end{proof}

Next, we encode the orders and positions of all cubes.  We build a 
DFA accepting the language
$$ 
\{ (i,n)_T \ : \ 
(n>0) \ \wedge \ \forall j \ ((i\leq j) \wedge (j< i+2n))
        \implies {\bf T}[j] = {\bf T}[j+n]
\} . $$

\begin{theorem}
The language
$$ \{ (n,i)_T \ : \ \text{there is a cube of order $n$ beginning at
	position $i$ in {\bf T}} \}$$
is accepted by the automaton in Figure~\ref{cubeorders}.

\begin{figure}[H]
\begin{center}
\includegraphics[width=6.5in]{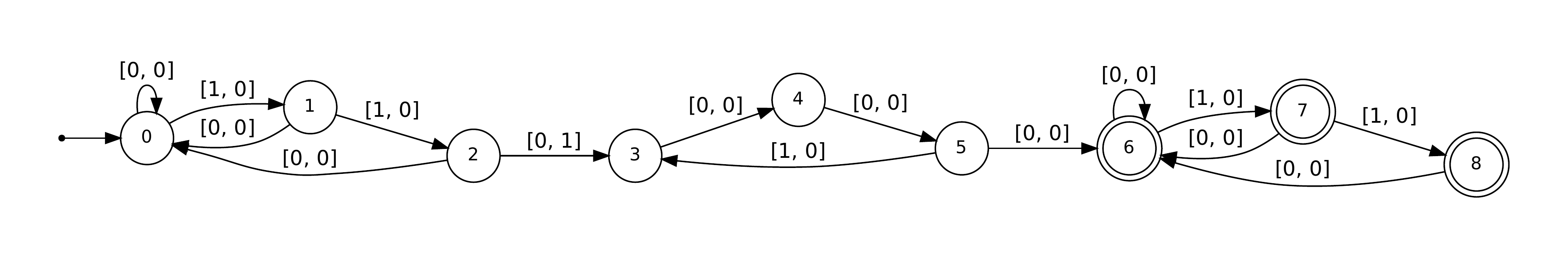}
\caption{Automaton accepting orders and positions of all cubes in $\bf T$}
\label{cubeorders}
\end{center}
\end{figure}
\label{cubes}
\end{theorem}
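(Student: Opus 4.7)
The plan is to express the statement ``there is a cube of order $n$ beginning at position $i$ in $\bf T$'' as a first-order predicate in the structure $\langle \Enn, +, n \mapsto {\bf T}[n]\rangle$ and then invoke Procedure~\ref{proc:Fib-auto-decide} to construct the corresponding DFA; the figure is obtained simply by displaying this DFA.

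First I would write the literal translation of the cube condition: a cube of order $n \geq 1$ at position $i$ means $(n>0) \wedge \forall t<2n\ {\bf T}[i+t] = {\bf T}[i+n+t]$. As already observed in the proofs of Theorems~\ref{squares} and the preceding cube theorem, feeding this form directly to the solver produces an intolerably large intermediate NFA because of the three distinct indices $i$, $i+t$, $i+n+t$ sharing the variable $t$ inside ${\bf T}[\cdot]$. So I would apply the same trick used there and substitute $j = i+t$, rewriting the predicate as
\[
(n>0) \ \wedge\ \forall j\ \bigl((i\leq j) \wedge (j < i+2n)\bigr) \implies {\bf T}[j] = {\bf T}[j+n].
\]
This is syntactically identical to the predicate used immediately above in the preceding cube theorem, except that the outer $\exists i$ is dropped so as to leave $i$ free and thereby record the starting position alongside the order.

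Next I would run this predicate through the decision procedure. Since constants, $+$, $<$, $\leq$, bounded universal quantification, logical connectives, and indexing into the Tribonacci-automatic sequence $\bf T$ are all permissible, Procedure~\ref{proc:Fib-auto-decide} produces a DFA on the alphabet $\Sigma_2^2$ accepting exactly $0^*\{(n,i)_T : \text{the predicate holds}\}$. After minimization this is the automaton drawn in Figure~\ref{cubeorders}, which establishes the claim.

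The main obstacle is the same as in the two preceding proofs: the intermediate automata before the final universal quantifier is eliminated are huge (we should expect on the order of $10^4$--$10^5$ states, comparable to the 60743-state intermediate reported for the previous cube theorem), so the computation is memory-bound rather than conceptually hard. Beyond that, once the DFA is produced there is nothing further to verify by hand; the picture in Figure~\ref{cubeorders} simply displays the machine.
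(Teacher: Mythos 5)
Your proposal matches the paper's own treatment: the paper likewise takes the rewritten cube predicate $(n>0) \wedge \forall j\, ((i\leq j)\wedge(j<i+2n)) \implies {\bf T}[j]={\bf T}[j+n]$ from the preceding theorem, leaves $i$ free instead of existentially quantified, and runs the decision procedure to obtain the DFA displayed in Figure~\ref{cubeorders}. No substantive difference.
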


We also computed an automaton accepting those pairs $(p,n)$ such that
there is a factor of $\bf T$ having length $n$ and period $p$, and
$n$ is the largest such length corresponding to the period $p$.
However, this automaton has 266 states, so we do not give it here.

\subsection{Palindromes}

We now turn to a characterization of the palindromes in $\bf T$.
Once again it turns out that the predicate we previously used
in \cite{Du&Mousavi&Schaeffer&Shallit:2014}, namely,
$$ \exists i \ \forall j<n \ {\bf T}[i+j] = {\bf T}[i+n-1-j], $$
resulted in an intermediate NFA of 5711 states that we could not
successfully determinize.

Instead, we used two equivalent predicates.  The first
accepts $n$ if there is an even-length palindrome,
of length $2n$, centered at position $i$:
$$ \exists i \geq n \ \forall j<n\ {\bf T}[i+j]={\bf T}[i-j-1] .$$
The second accepts $n$ if there is an odd-length palindrome,
of length $2n+1$,
centered at position $i$:
$$ \exists i \geq n \ \forall j\ (1 \leq j \leq n) \implies
{\bf T}[i+j]={\bf T}[i-j].$$

\begin{theorem}
There exist palindromes of every length $\geq 0$ in $\bf T$.
\end{theorem}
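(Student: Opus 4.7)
The plan is to invoke Procedure~\ref{proc:Fib-auto-decide} twice, once for each of the two predicates already presented immediately before the theorem statement. The first predicate,
$$\psi_e(n) \ := \ \exists i \ (i \geq n) \ \wedge \ \forall j<n\ {\bf T}[i+j]={\bf T}[i-j-1],$$
encodes the existence of an even-length palindrome of length $2n$ in ${\bf T}$, and the second,
$$\psi_o(n) \ := \ \exists i \ (i \geq n) \ \wedge \ \forall j\ (1 \leq j \leq n) \implies {\bf T}[i+j]={\bf T}[i-j],$$
encodes the existence of an odd-length palindrome of length $2n+1$. Feeding each predicate into the decision procedure yields a DFA over $\Sigma_2$ whose accepted language is the set of canonical Tribonacci representations of the $n$ for which the predicate holds.

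First I would compute the DFA $A_e$ associated with $\psi_e$ and verify (by inspection, or equivalently by checking emptiness of its complement intersected with the language of all canonical Tribonacci representations) that $A_e$ accepts every $n \geq 0$; this shows palindromes of every even length $2n$ occur in ${\bf T}$. Next I would do the same for $\psi_o$, obtaining a DFA $A_o$, and check that it accepts every $n \geq 0$, which gives palindromes of every odd length $2n+1$. Combined with the trivial fact that the empty word is a palindrome of length $0$, this covers all nonnegative lengths.

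The main obstacle is exactly the one the paper flags just before the theorem: the state-space blowup during the NFA-to-DFA determinization of the intermediate automata. The naive single predicate $\exists i\ \forall j<n\ {\bf T}[i+j]={\bf T}[i+n-1-j]$ already produced a 5711-state NFA that could not be determinized in the available space, essentially because one single-quantifier block has to simultaneously track three independent Tribonacci-indexed positions $i+j$, $i+n-1-j$, and $n$ through the adder automaton. Splitting into even- and odd-length cases, and re-expressing the mirror condition around a center $i$ as $i+j$ versus $i-j-1$ (or $i-j$), replaces one of the ``two unknowns added together'' by a subtraction from a single anchor, and this is what the predicates $\psi_e,\psi_o$ already do; once past that hurdle the final universality checks on $A_e$ and $A_o$ are routine finite computations.
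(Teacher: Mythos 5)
Your proposal follows the paper's proof exactly: it uses the same two predicates (even-length palindromes of length $2n$ and odd-length palindromes of length $2n+1$ centered at $i$), runs each through the decision procedure, and checks that the resulting automata accept all $n\geq 0$. This is precisely what the paper does, including the motivation for splitting into two centered predicates to avoid the determinization blowup of the naive single predicate.
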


\begin{proof}
For the first predicate,
our program outputs the automaton below.  It clearly accepts the
Tribonacci representations for all $n$.
\begin{figure}[H]
\begin{center}
\includegraphics[width=4.8in]{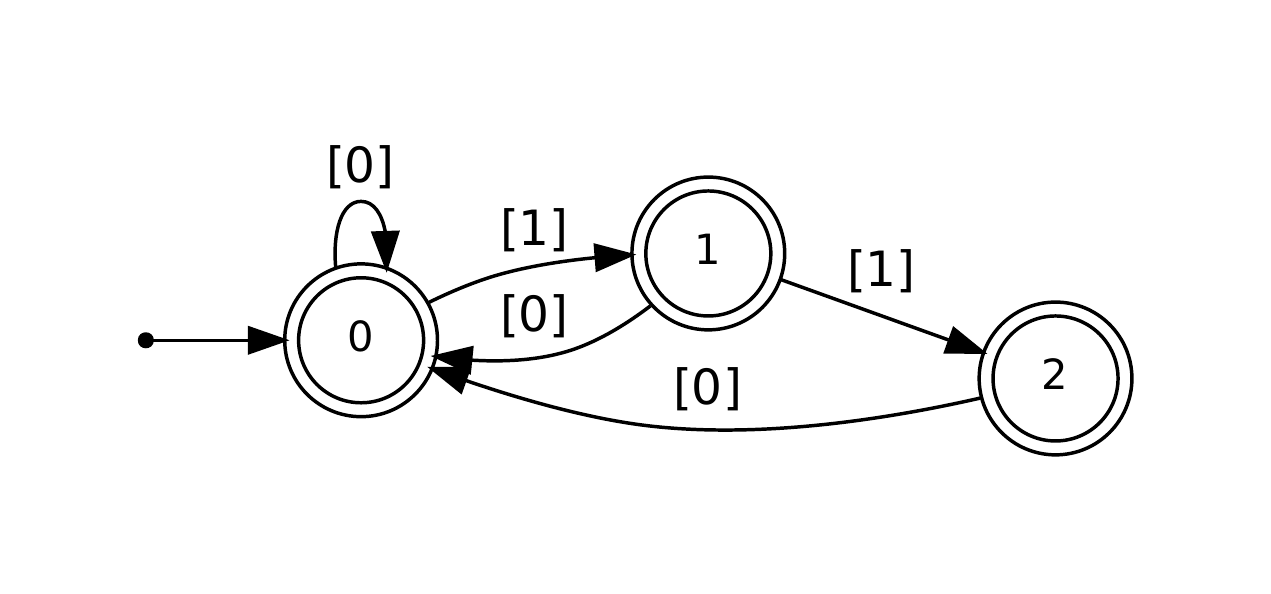}
\caption{Automaton accepting lengths of
palindromes in $\bf T$}
\label{palindrome-orders1}
\end{center}
\end{figure}
The log of our program follows.
\begin{verbatim}
i >= n with 13 states, in 34ms
 j < n with 13 states, in 8ms
  i + j with 150 states, in 53ms
   i - 1 with 7 states, in 155ms
    i - 1 - j with 150 states, in 166ms
     TR[i + j] = TR[i - 1 - j] with 664 states, in 723ms
      j < n => TR[i + j] = TR[i - 1 - j] with 3312 states, in 669ms
       Aj j < n => TR[i + j] = TR[i - 1 - j] with 24 states, in 5782274ms
        i >= n & Aj j < n => TR[i + j] = TR[i - 1 - j] with 24 states, in 0ms
         Ei i >= n & Aj j < n => TR[i + j] = TR[i - 1 - j] with 4 states, in 6ms
overall time: 5784088ms
\end{verbatim}
The largest intermediate automaton had 918871 states.  This was a fairly
significant computation, taking about two hours' CPU time on a laptop.

We omit the details of the computation for the odd-length palindromes,
which are quite similar.
\end{proof}

\begin{remark}
A. Glen has pointed out to us that this follows from the fact that
$\bf T$ is episturmian and hence rich, so a new palindrome is introduced
at each new position in $T$.
\end{remark}

We could also characterize the positions of all 
nonempty palindromes.  To illustrate the idea, we
generated an automaton accepting
$(i,n)$ such that ${\bf T}[i-n..i+n-1]$ is an (even-length) palindrome.

\begin{figure}[H]
\begin{center}
\includegraphics[width=6.5in]{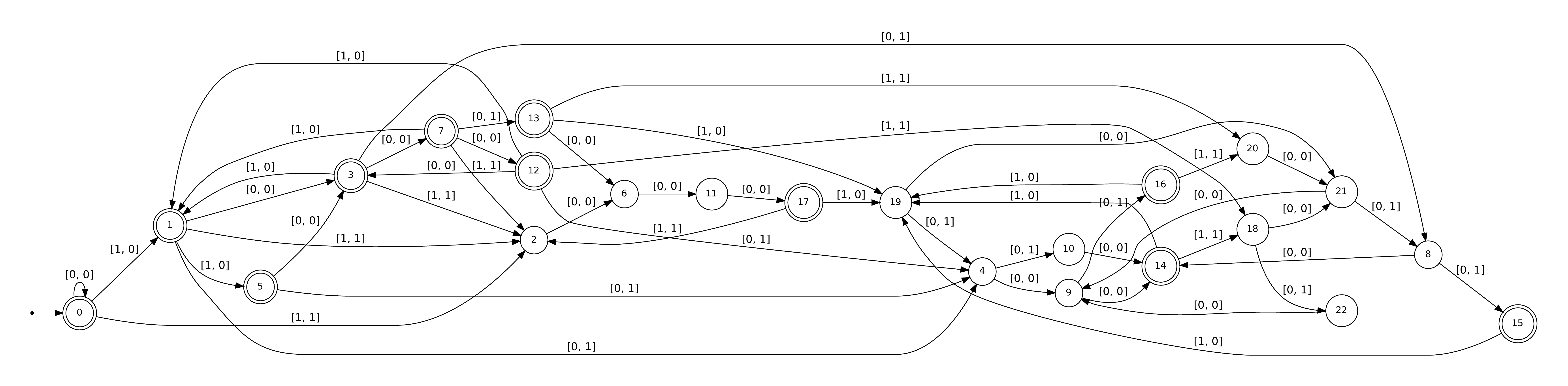}
\caption{Automaton accepting orders and positions of all 
nonempty even-length palindromes in $\bf T$}
\label{palindrome-orders2}
\end{center}
\end{figure}

The prefixes are factors of particular interest.  Let us determine
which prefixes are palindromes:

\begin{theorem}
The prefix ${\bf T}[0..n-1]$ of length $n$ is a palindrome if and
only if $n=0$ or $(n)_T \in 1 + 11 + 10(010)^*(00+001+0011)$.  
\end{theorem}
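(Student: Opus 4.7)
The plan is to follow the same template used for the preceding repetition and palindrome theorems: express ``the length-$n$ prefix of $\mathbf T$ is a palindrome'' as a first-order formula using Presburger arithmetic together with indexing into $\mathbf T$, feed it into Procedure~\ref{proc:Fib-auto-decide} to obtain a DFA $A$ accepting exactly the Tribonacci representations of those $n$ for which ${\bf T}[0..n-1]$ is a palindrome, and then verify by automaton equivalence that $L(A)$ equals the claimed regular language $\epsilon + 1 + 11 + 10(010)^*(00+001+0011)$.

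First I would write the natural predicate
$$ \varphi(n) \; := \; \forall j\ (j < n) \implies {\bf T}[j] = {\bf T}[n-1-j]. $$
Based on the state blow-ups reported for the centred-palindrome predicates just above (the NFA with $5711$ states that could not be determinized, and the intermediate automaton of $918871$ states that succeeded only after reformulation), I would expect a direct translation of $\varphi$ to be infeasible, precisely because of the subtraction $n-1-j$ mixing a quantified variable with the free variable $n$. To sidestep this, I would introduce an auxiliary quantified variable $k$ playing the role of the mirror index and use the equivalent formulation
$$ \varphi'(n) \; := \; \forall j,k\ \bigl((j < n) \wedge (j + k + 1 = n)\bigr) \implies {\bf T}[j] = {\bf T}[k], $$
so that every occurrence of $\mathbf T$ is indexed by a single variable and only addition appears. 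This mirrors the trick used in the proofs of Theorems about squares and cubes, where rewriting $i+t$ as a fresh variable $j$ dramatically reduced the size of the intermediate automata.

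The final step is purely automata-theoretic. After the decision procedure returns the DFA $A$, I would construct a small DFA $B$ for $\epsilon + 1 + 11 + 10(010)^*(00+001+0011)$, minimise both $A$ and $B$, and check that the minimal DFAs are isomorphic (equivalently, verify that the symmetric difference $L(A)\triangle L(B)$ is empty, which itself is a finite-automaton computation). The case $n=0$ corresponds to $(n)_T = \epsilon$, which is a palindrome vacuously and is covered explicitly by the $\epsilon$ branch of the claimed expression.

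The hard part will not be the logic, which is entirely routine, but rather controlling the size of the intermediate automata, as already observed for the other palindrome predicates in this section. If even $\varphi'$ proves too costly to determinize, my fallback would be to split into even-length and odd-length prefixes (setting the centre $i$ to $n/2$ or $(n-1)/2$ via an auxiliary variable and re-using the centred-palindrome predicates from the previous theorem), or to introduce a second auxiliary variable for $n-1$ so that subtraction is eliminated entirely; experience with the other predicates in this section suggests that one of these reformulations will succeed within the memory budget.
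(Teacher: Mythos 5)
Your proposal takes essentially the same approach as the paper: write the palindromic-prefix property as a first-order predicate, run the decision procedure, and read off (or verify) the resulting automaton against the claimed regular expression. The only difference is that the paper finds the direct predicate $\forall i<n\ {\bf T}[i]={\bf T}[n-1-i]$ already feasible here (only two variables index ${\bf T}$, unlike the three-variable centred-palindrome predicates that blew up), so your auxiliary-variable reformulation and fallbacks, while sound, turn out to be unnecessary.
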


\begin{proof}
We use the predicate
$$ \forall i<n\ {\bf T}[i] = {\bf T}[n-1-i].$$
The automaton generated is given below.
\begin{figure}[H]
\begin{center}
\includegraphics[width=6.5in]{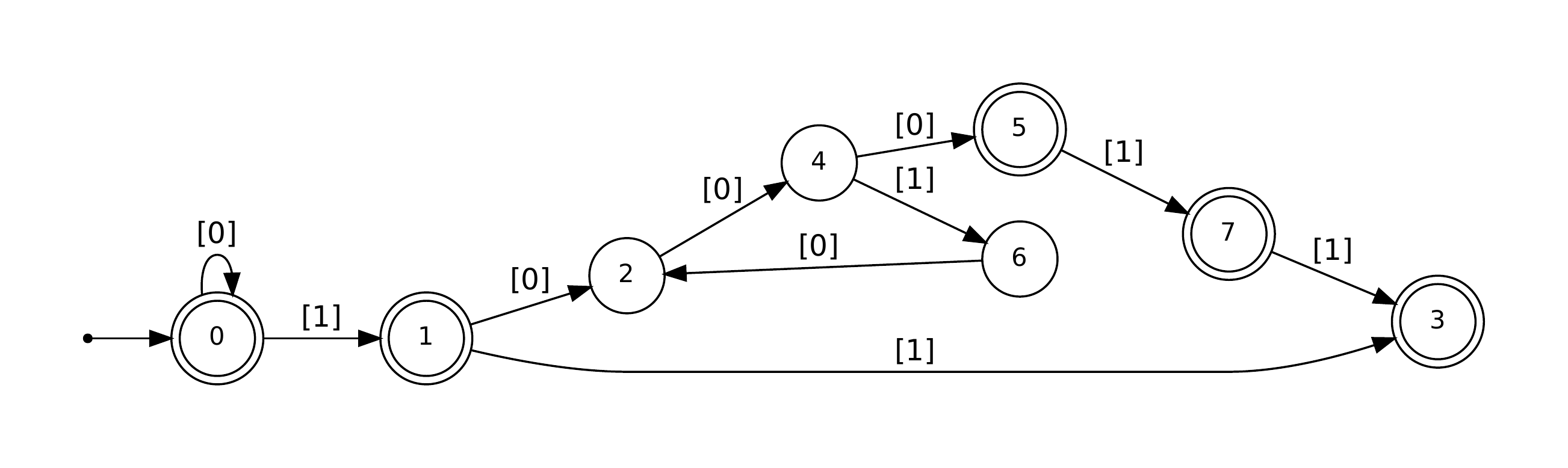}
\caption{Automaton accepting lengths of
palindromes in $\bf T$}
\label{palindrome-orders3}
\end{center}
\end{figure}
\end{proof}

\begin{remark}  A. Glen points out to us that the palindromic
prefixes of $\bf T$ are precisely those of the form $\Pal(w)$, where
$w$ is a finite prefix of the infinite word $(012)^\omega$ and
$\Pal$ denotes the ``iterated palindromic closure''; see, for example,
\cite[Example 2.6]{Glen&Justin:2009}.  She also points out that these
lengths are precisely the integers
$(T_i + T_{i+2} - 3)/2$ for $i \geq 1$.  
\end{remark}

\subsection{Quasiperiods}

We now turn to quasiperiods.  An infinite word $\bf a$ is said to be
{\it quasiperiodic} if there is some finite nonempty word $x$ such that
${\bf a}$ can be completely ``covered'' with translates of $x$.  Here
we study the stronger version of quasiperiodicity where the first copy
of $x$ used must be aligned with the left edge of $\bf w$ and is not
allowed to ``hang over''; these are called {\it aligned covers} in
\cite{Christou&Crochemore&Iliopoulos:2012}.  More precisely, for us
${\bf a} = a_0 a_1 a_2 \cdots$ is quasiperiodic if there exists
$x$ such that for all $i \geq 0$ there exists $j\geq 0$ with $i-n < j \leq i$
such that $a_j a_{j+1} \cdots a_{j+n-1} = x$, where $n = |x|$.
Such an $x$ is called a {\it quasiperiod}.
Note that the condition $j \geq 0$ implies that, in this interpretation,
any quasiperiod must actually be a prefix of $\bf a$.

Glen, Lev\'e, and Richomme characterized the quasiperiods of a large
class of words, including the Tribonacci word \cite[Thm.~4.19]{Glen&Leve&Richomme:2008}.  However, their characterization did not explicitly give
the lengths of the quasiperiods.  We do that in the following result.

\begin{theorem}
A nonempty length-$n$ prefix of $\bf T$ is a quasiperiod of $\bf T$ if
and only if $n$ is accepted by the following automaton:
\begin{figure}[H]
\begin{center}
\includegraphics[width=6.5in]{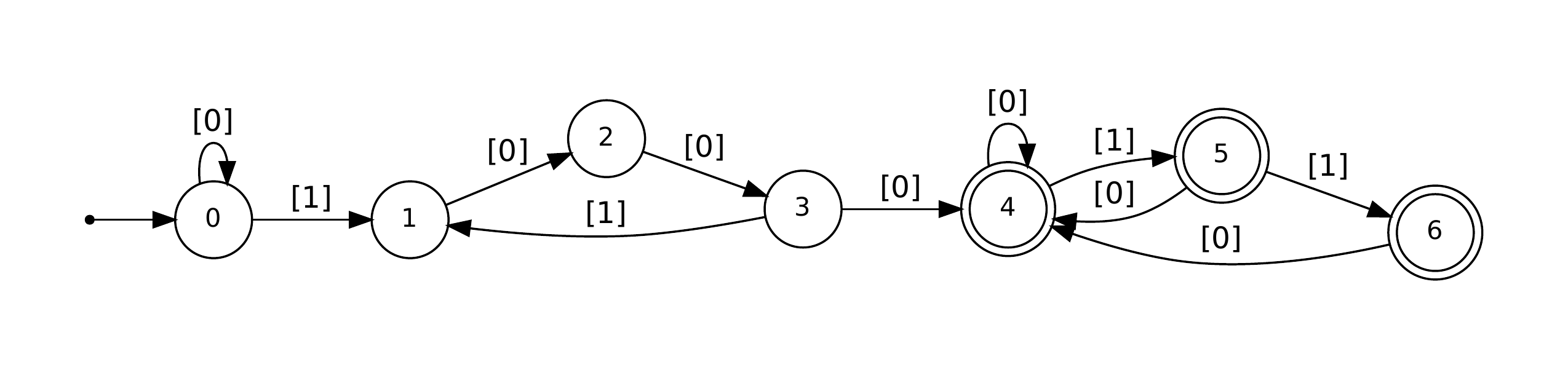}
\caption{Automaton accepting lengths of quasiperiods of the
Tribonacci sequence}
\label{quasitrib}
\end{center}
\end{figure}
\end{theorem}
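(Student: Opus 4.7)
The plan is to express the statement ``the length-$n$ prefix of $\bf T$ is a quasiperiod of $\bf T$'' as a first-order predicate of the kind handled by Procedure~\ref{proc:Fib-auto-decide}, and then let the decision procedure produce the automaton in Figure~\ref{quasitrib}. Unpacking the definition given in the text, the prefix $x = {\bf T}[0\dotdot n-1]$ covers every position $i$ of $\bf T$ if and only if there is some starting position $j$ satisfying $j \leq i < j+n$ such that the factor ${\bf T}[j\dotdot j+n-1]$ agrees with ${\bf T}[0\dotdot n-1]$. All of the ingredients---comparison, addition, and indexing into $\bf T$---are available in our logical framework, and the requirement $j \geq 0$ is automatic because variables range over $\Enn$.

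My first attempt at a predicate would be
$$(n \geq 1) \wedge \forall i\, \exists j\, (j \leq i) \wedge (i < j+n) \wedge \bigl(\forall t<n\ :\ {\bf T}[j+t] = {\bf T}[t]\bigr).$$
Following the lesson repeatedly demonstrated in Section~\ref{repe-subsec}, I expect the triple-indexed sum $j+t$ inside an innermost $\forall$ to cause an intermediate NFA blow-up during determinization (as happened with the fourth-power, square, and palindrome predicates). So before running it I would pre-emptively substitute $k = j+t$ and rewrite the factor-equality subformula as
$$\forall k\ :\ \bigl((j \leq k) \wedge (k < j+n)\bigr) \implies {\bf T}[k] = {\bf T}[k-j].$$
This leaves only two-way sums inside the arguments of $\bf T$, matching the pattern that succeeded for squares and cubes.

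Assuming the substitution tames the determinization, Procedure~\ref{proc:Fib-auto-decide} will output a DFA accepting exactly the Tribonacci representations of the valid quasiperiod lengths. To prove the theorem it then suffices to verify that the DFA produced by our implementation is equivalent to (and, after minimization, isomorphic to) the automaton displayed in Figure~\ref{quasitrib}---a routine equivalence check between minimal DFAs.

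The main obstacle I anticipate is purely computational: the predicate involves three quantified position variables bound by inequalities involving $n$, structurally similar to the even-length palindrome computation which ballooned to nearly a million intermediate states and consumed hours of CPU time. If even the $k$-substituted form is too large to determinize, I would fall back to a modular construction: first build an auxiliary automaton recognizing the ``prefix occurrence'' relation $\{(j,n)_T \st {\bf T}[j\dotdot j+n-1] = {\bf T}[0\dotdot n-1]\}$, then express quasiperiodicity as $(n \geq 1) \wedge \forall i\, \exists j\, \bigl((j \leq i) \wedge (i < j+n) \wedge \mathrm{PrefOcc}(j,n)\bigr)$, using the auxiliary automaton in place of the quantified factor-equality. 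Building block by block and minimizing at each stage has reliably kept intermediate sizes manageable elsewhere in the paper, and I would expect it to suffice here.
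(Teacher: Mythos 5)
Your predicate is exactly the one the paper uses (your conditions $j \leq i$ and $i < j+n$ are just a rewriting of the paper's $i-n < j \leq i$), and the proof is the same: run the decision procedure and read off the automaton. The only difference is that your pre-emptive substitution $k = j+t$ turns out to be unnecessary here—the paper reports that the direct predicate runs to completion—so the proposal is correct and essentially identical to the paper's argument.
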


\begin{proof}
We write a predicate for the assertion that the length-$n$ prefix is 
a quasiperiod:
$$\forall i \geq 0 \ \exists j \text{ with }  i-n < j \leq i
\text{ such that } \forall t<n \ {\bf T}[t] = {\bf T}[j+t] .$$
When we do this, we get the automaton above.
These numbers are those $i$ for which
$T_n \leq i \leq U_n$ for $n \geq 5$,
where $U_2 = 0$, $U_3 = 1$, $U_4 = 3$, and 
$U_n = U_{n-1} + U_{n-2} + U_{n-3} + 3$ for $n \geq 5$. 
\end{proof}

\subsection{Unbordered factors}

Next we look at unbordered factors.  A word $y$ is said to be a {\it
border} of $x$ if $y$ is both a nonempty proper prefix and suffix of
$x$.  A word $x$ is {\it bordered} if it has at least one border.  It
is easy to see that if a word $y$ is bordered iff it has a border of
length $\ell$ with $0 < \ell \leq |y|/2$.

\begin{theorem}
There is an unbordered factor of length $n$ of $\bf T$ if and only if
$(n)_T$ is accepted by the automaton given below.
\begin{figure}[H]
\begin{center}
\includegraphics[width=6.5in]{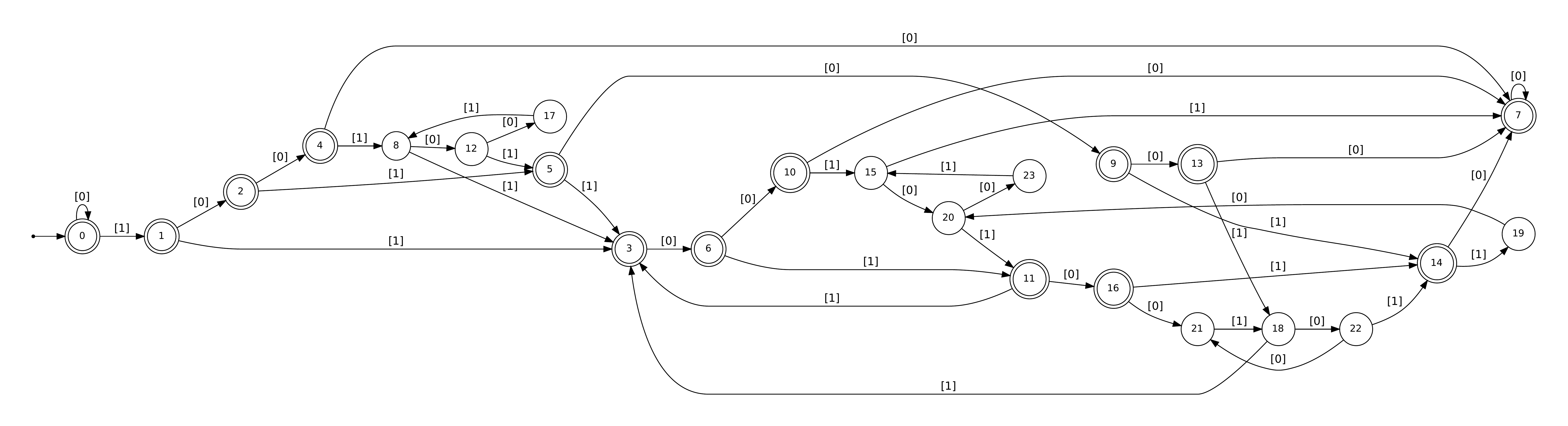}
\caption{Automaton accepting lengths of unbordered factors of the
Tribonacci sequence}
\label{tribunb}
\end{center}
\end{figure}
\end{theorem}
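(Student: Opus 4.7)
The plan is to apply Procedure~\ref{proc:Fib-auto-decide} to a first-order predicate that asserts the existence of an unbordered factor of length $n$ in $\bf T$. Invoking the hint stated just before the theorem---that a word $y$ is bordered iff it has a border of length $\ell$ with $1\leq \ell\leq |y|/2$---a natural encoding is
$$\exists i\ \forall \ell\ \Bigl((1\leq \ell)\wedge(2\ell\leq n)\Bigr)\implies \exists t\ (t<\ell)\wedge\bigl({\bf T}[i+t]\neq{\bf T}[i+n-\ell+t]\bigr),$$
where $i$ names the starting position of the candidate factor and, for each potential border length $\ell$, the existential witness $t$ pinpoints a position at which the length-$\ell$ prefix and length-$\ell$ suffix of ${\bf T}[i\dotdot i+n-1]$ disagree. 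All constants, arithmetic, and comparisons used here are expressible in $\Th(\Enn,0,1,+)$, so the formula is directly admissible as input to Procedure~\ref{proc:Fib-auto-decide}.

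Based on the repeated blow-ups encountered earlier in the paper---for example, in the squares theorem, where substituting $j:=i+t$ was required to tame the three-variable indexing---I expect an analogous rewriting to be essential here. Introducing $j:=i+t$ and $k:=j+n-\ell$ recasts the inner mismatch clause as $(i\leq j)\wedge(j<i+\ell)\wedge(k=j+n-\ell)\wedge({\bf T}[j]\neq{\bf T}[k])$, so that each application of $\bf T$ is indexed by a single variable and the arithmetic relating $i,j,k,\ell,n$ is isolated in short atomic conjuncts. The decision procedure then produces a DFA over $\Sigma_2$ which, after minimization, should coincide with the one in Figure~\ref{tribunb} (up to the usual $0^*$ convention), and correctness follows immediately from the soundness of Procedure~\ref{proc:Fib-auto-decide}.

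The main obstacle will almost certainly be the determinization of the subformula guarded by $\forall \ell$: the product construction used to encode the universal quantifier over candidate border lengths, fused with the disagreement disjunction over positions $t$, typically inflates the intermediate NFA to tens or even hundreds of thousands of states, as the logs for the squares, cubes, and palindrome theorems show. If a single-pass run exhausts memory, a reasonable fallback is to build and minimize an auxiliary DFA for the three-variable language $\{(\ell,n,i)_T : \text{a border of length $\ell$ occurs in } {\bf T}[i\dotdot i+n-1]\}$ first, then complement, apply $\forall\ell$, and finally project onto $n$. Once the resulting DFA is produced and minimized, the theorem is established by simple inspection against Figure~\ref{tribunb}.
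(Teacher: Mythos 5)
Your proposal matches the paper's proof in all essentials: the paper starts from the same predicate $\exists i\ \forall j,\ 1 \leq j \leq n/2,\ \exists t<j\ {\bf T}[i+t] \neq {\bf T}[i+n-j+t]$, finds that it exhausts the prover's space, and rescues it by the substitutions $t' = n-j$ and $u = i+t$ so that each occurrence of $\bf T$ is indexed by a sum of at most two variables, then reads off the automaton from the decision procedure's output. Your change of variables ($j:=i+t$, $k:=j+n-\ell$) differs only cosmetically from the paper's, and your anticipation of the determinization blow-up is borne out by the paper's log (an intermediate step with thousands of states and a multi-hour run).
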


\begin{proof}
As in a previous paper
\cite{Du&Mousavi&Schaeffer&Shallit:2014}
we can express the property of having an unbordered factor of length $n$
as follows
$$ \exists i\ \forall j, 1 \leq j \leq n/2, \
	\exists t<j\ {\bf T}[i+t] \not= {\bf T}[i+n-j+t] .$$
However, this does not run to completion within the available space on
our prover.
Instead,
make the substitutions
$t' = n-j$ and $u = i+t$.   This gives the predicate
$$ \exists i\ \forall t',\ n/2 \leq t' <n, \
        \exists u, \ (i \leq u < i+n-t')\ {\bf T}[u] \not= {\bf T}[u+t'] .$$

Here is the log:
{\footnotesize
\begin{verbatim}
2 * t with 61 states, in 276ms
n <= 2 * t with 79 states, in 216ms
t < n with 13 states, in 3ms
n <= 2 * t & t < n with 83 states, in 9ms
u >= i with 13 states, in 7ms
i + n with 150 states, in 27ms
i + n - t with 1088 states, in 7365ms
u < i + n - t with 1486 states, in 6041ms
u >= i & u < i + n - t with 1540 states, in 275ms
u + t with 150 states, in 5ms
TR[u] != TR[u + t] with 102 states, in 22ms
u >= i & u < i + n - t & TR[u] != TR[u + t] with 7489 states, in 3364ms
Eu u >= i & u < i + n - t & TR[u] != TR[u + t] with 552 states, in 5246873ms
n <= 2 * t & t < n => Eu u >= i & u < i + n - t & TR[u] != TR[u + t] with 944 states, in 38ms
At n <= 2 * t & t < n => Eu u >= i & u < i + n - t & TR[u] != TR[u + t] with 47 states, in 1184ms
Ei At n <= 2 * t & t < n => Eu u >= i & u < i + n - t & TR[u] != TR[u + t] with 25 states, in 2ms
overall time: 5265707ms
\end{verbatim}
}
\end{proof}

\subsection{Lyndon words}

Next, we turn to some results about Lyndon words.  Recall that a
nonempty word $x$ is a {\it Lyndon word\/} if it is lexicographically
less than all of its nonempty proper prefixes.\footnote{There is also a version
where ``prefixes'' is replaced by ``suffixes''.}  

\begin{theorem} There is a factor of length $n$ of $\bf T$ that is
Lyndon if and only if $n$ is accepted by the automaton given below.
\begin{figure}[H]
\begin{center}
\includegraphics[width=6.5in]{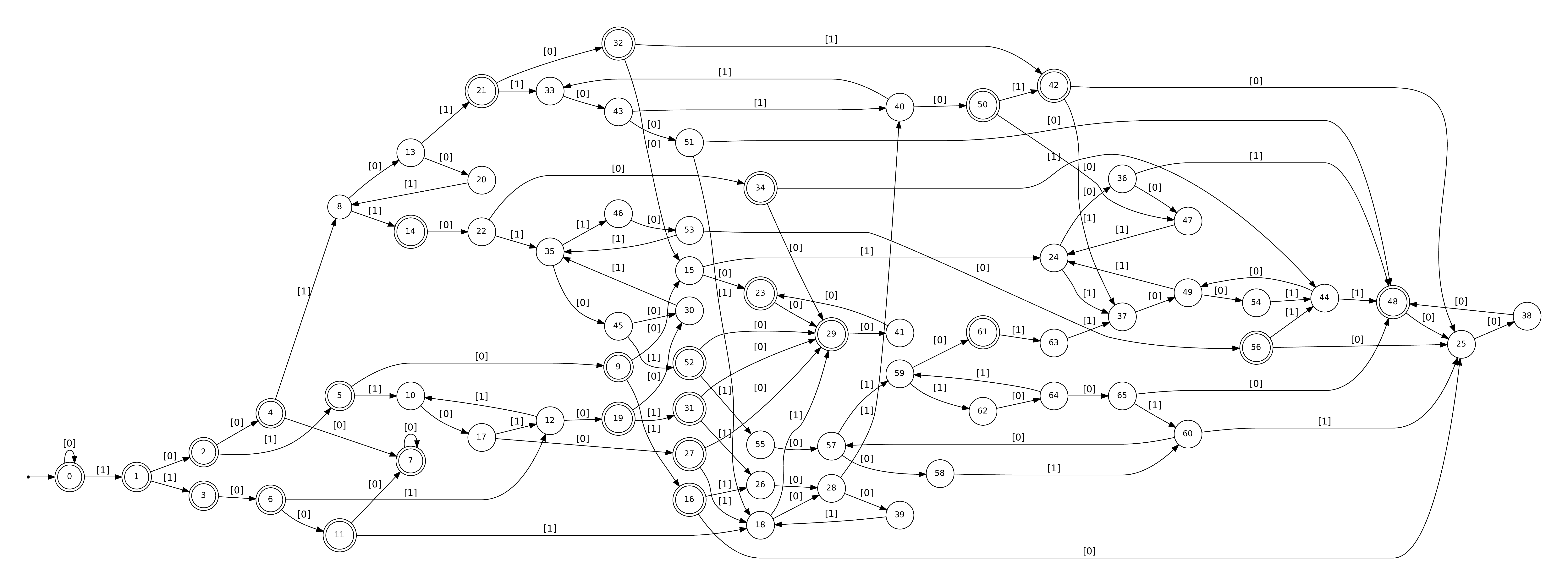}
\caption{Automaton accepting lengths of Lyndon factors of the
Tribonacci sequence}
\label{triblyndon}
\end{center}
\end{figure}
\end{theorem}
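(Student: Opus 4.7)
The plan is to follow the paradigm established for the preceding theorems in this section: express ``$\bf T$ has a Lyndon factor of length $n$'' as a first-order predicate over the decidable Tribonacci-automatic structure, then invoke Procedure~\ref{proc:Fib-auto-decide} to produce the accepting automaton, whose inspection yields Figure~\ref{triblyndon}.

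I would use the equivalent characterization of the Lyndon property in terms of suffixes (cf.\ the footnote): the factor $x = {\bf T}[i\dotdot i+n-1]$ is Lyndon iff for every nonempty proper suffix $y = {\bf T}[i+j\dotdot i+n-1]$ with $1 \leq j < n$, we have $x < y$ lexicographically. Since $|y| = n-j < n = |x|$, the suffix $y$ cannot have $x$ as a prefix, so $x < y$ is equivalent to the existence of a first position at which the two words disagree, with $x$'s letter the smaller. This yields the predicate
\[
\mathrm{Lyn}(i,n) \coloneq \forall j\ \bigl((1 \leq j < n) \implies \exists k\ (k < n-j) \wedge P(i,j,k)\bigr),
\]
where $P(i,j,k)$ abbreviates
\[
\bigl(\forall \ell < k\ {\bf T}[i+\ell] = {\bf T}[i+j+\ell]\bigr) \wedge {\bf T}[i+k] < {\bf T}[i+j+k].
\]
The set of Lyndon-factor lengths is then the first-order definable set $\{\, n \st \exists i\ \mathrm{Lyn}(i,n)\,\}$, to which our decision procedure applies.

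Next, I would feed this predicate into the implementation. As in the proofs for squares, cubes, palindromes, and unbordered factors, a naive implementation is apt to produce an intermediate NFA that cannot be determinized in the available space, because ${\bf T}[\cdot]$ is being indexed at four different sums ($i+\ell$, $i+j+\ell$, $i+k$, $i+j+k$). I would therefore introduce fresh variables $u = i+\ell$, $v = i+j+\ell$, $u' = i+k$, $v' = i+j+k$, tied together by simple linear constraints, following the substitution trick already used for Theorems~\ref{squares} and~\ref{cubes}. An alternative is to first build an auxiliary automaton recognizing the ``longest common prefix length $k$'' relation between the factors starting at $i$ and at $i+j$, and then conjoin it with the strict comparison of the next pair of letters. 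Either route eliminates the problematic triple-indexing into $\bf T$ and should make the construction feasible; existentially eliminating $i$ from the resulting DFA projects out the one-dimensional automaton accepting exactly the lengths of Lyndon factors.

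The main obstacle is the state-space blow-up arising from the $\forall\exists\forall$ quantifier alternation combined with the several linear sums appearing inside ${\bf T}[\cdot]$. The logical content of the argument is essentially immediate once the predicate is set up correctly; the actual work lies in choosing substitutions and an order of quantifier elimination under which the intermediate automata stay within the prover's memory. Given the scale of intermediate automata reported in the preceding proofs (close to a million states for the palindrome theorem, and over half a million for unbordered factors), I expect this computation to be at least as expensive, but the output, as always, is a small DFA that can be exhibited directly.
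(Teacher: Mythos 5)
Your proposal matches the paper's proof essentially exactly: the paper uses the same suffix-based predicate $\exists i\ \forall j, 1 \leq j < n,\ \exists t < n-j\ (\forall u<t\ {\bf T}[i+u]={\bf T}[i+j+u]) \wedge {\bf T}[i+t] < {\bf T}[i+j+t]$, finds that it does not run to completion, and applies the same substitution $u' := i+u$ to turn the inner equality test into ${\bf T}[u']={\bf T}[u'+j]$ before running the decision procedure. No substantive difference.
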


\begin{proof}
Here is a predicate 
specifying that there is a factor of length $n$ that is Lyndon:
$$
\exists i\ \forall j, 1 \leq j < n, \ 
\exists t < n-j \ (\forall u<t \ {\bf T}[i+u]={\bf T}[i+j+u]) \ \wedge \ 
{\bf T}[i+t] < {\bf T}[i+j+t] .$$
Unfortunately this predicate did not run to completion, so we
substituted $u' := i+u$ to get
$$
\exists i\ \forall j, 1 \leq j < n, \ 
\exists t < n-j \ (\forall u', i \leq u' < i+t \ {\bf T}[u']={\bf T}[u'+j]) \ \wedge \ 
{\bf T}[i+t] < {\bf T}[i+j+t] .$$
\end{proof}

\subsection{Critical exponent}

Recall from Section~\ref{proofsf} that $\exp(w) = |w|/P$, where $P$ is the
smallest period of $w$.  The {\it critical exponent} of an infinite
word $\bf x$ is the supremum, over all factors $w$ of $\bf x$, of
$\exp(w)$.  

Then Tan and Wen \cite{Tan&Wen:2007} proved that

\begin{theorem}
The critical exponent of $\bf T$ is $\rho \doteq 3.19148788395311874706$,
the real zero of the polynomial $2x^3-12x^2+22x-13$.
\end{theorem}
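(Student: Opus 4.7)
The plan is to use the 266-state automaton $N$ referred to in the paragraph following Theorem~\ref{cubes}, which accepts a pair $(p,n)_T$ precisely when $n$ is the largest length of a factor of $\bf T$ having period $p$. The critical exponent of $\bf T$ is then
$$\sup\bigl\{ n/p \st (p,n)_T \in L(N) \bigr\},$$
so the task reduces to extracting this supremum from the structure of $N$.

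The main step is a cycle analysis of $N$. For each simple cycle $C$ in $N$, together with an entry path from the initial state and an exit path ending at an accepting state, iterating $C$ produces an infinite family $(p_k,n_k)$ of accepted pairs whose Tribonacci representations grow linearly in $k$. Because the digits in each coordinate are assembled via the Tribonacci place values, both $p_k$ and $n_k$ satisfy linear recurrences controlled by the characteristic polynomial $x^3-x^2-x-1$, so $n_k/p_k$ converges to an algebraic number in $\mathbb{Q}(\alpha)$ that is explicitly computable from the digit pattern of $C$ using $\alpha^3 = \alpha^2+\alpha+1$. Since $N$ has only finitely many simple cycles, the supremum over all of them (together with ratios along the finitely many cycle-free accepted paths) is the critical exponent. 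Identifying the extremal cycle and simplifying the resulting expression in $\mathbb{Q}(\alpha)$, one checks that the limit value satisfies $2x^3-12x^2+22x-13 = 0$, and therefore equals $\rho$.

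The main obstacle is the practical difficulty of analyzing 266 states with their many simple cycles by hand; the argument is genuinely computer-assisted, relying on automated cycle enumeration combined with symbolic arithmetic in $\mathbb{Q}(\alpha)$. A subtler technical point is that one must confirm the single-cycle limits dominate nested-cycle configurations — equivalently, that the supremum is a ``tropical spectral radius'' of the weighted automaton with Tribonacci place-value contributions read off the transitions, and that this radius is attained at a single simple cycle rather than requiring alternation among several. A direct obstacle to rephrasing the upper bound $n/p \leq \rho$ as a single automaton-checkable predicate is that $\rho$ is irrational, so one cannot simply feed the inequality through Procedure~\ref{proc:Fib-auto-decide}; the cycle analysis above is the way around this. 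Note finally that $\rho$ itself is never attained as the exponent of any particular factor: the theorem is a statement about a supremum that is approached by the explicit family arising from the extremal cycle but never realized exactly.
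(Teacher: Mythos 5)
Your reduction of the critical exponent to $\sup\{\, n/p : (p,n)_T \in L(N)\,\}$ for the ``maximal length for each period'' automaton is sound, and your observation that $n/p$ converges along pumped cycles to explicitly computable elements of $\mathbb{Q}(\alpha)$ is correct (since $T_j = c_1\alpha^j + O(0.75^j)$, the most significant digits dominate and the limit along $uv^kw$ exists). But the step you flag as ``subtler'' is in fact the crux of the upper bound, and you leave it unproved: you need that the supremum of $n/p$ over \emph{all} accepted pairs --- whose representations traverse arbitrary interleavings of cycles --- equals the maximum of the single-simple-cycle limits together with the finitely many acyclic values. This does not follow from the finiteness of the set of simple cycles. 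It can be rescued (for instance, fix a candidate value $\rho$ and maximize the discounted sum $\sum_i (b_i - \rho a_i)\alpha^{-i}$ over infinite paths of $N$; this is a discounted dynamic program whose optimal policy is stationary, hence admits an eventually periodic optimizer), but as written your cycle analysis certifies only a lower bound on the critical exponent. You would also still have to carry out the computer-assisted enumeration on a 266-state machine, which the paper declines even to display.

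The paper takes a different and much lighter route that makes this issue disappear. Since $\bf T$ contains cubes, only factors of exponent $\geq 3$ matter; a factor of minimal period $p$ and length $\geq 3p$ begins with a cube of order $p$, and by the cube theorem (Theorem~\ref{cubes} and the result preceding it) this forces $p = T_j$. So one only needs the maximal length $U_j$ of a factor of period $T_j$, a one-parameter question whose answer the prover returns as a small automaton; from it one reads off $(U_j)_T$ explicitly, verifies $U_j = U_{j-1}+U_{j-2}+U_{j-3}+3$, solves to get $U_j = \tfrac{5}{2}T_j + T_{j-1} + \tfrac{1}{2}T_{j-2} - \tfrac{3}{2}$, and concludes $\sup_j U_j/T_j = \tfrac{5}{2} + \alpha^{-1} + \tfrac{1}{2}\alpha^{-2} = \rho$. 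The structural reduction to periods of the form $T_j$ is the key idea your proposal is missing: it replaces a global extremal-cycle analysis (with its attendant dominance question) by an explicit closed form and a direct limit.
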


A. Glen points out that
this result can also be deduced from \cite[Thm.~5.2]{Justin&Pirillo:2002}.

\begin{proof}
Let $x$ be any factor of exponent $\geq 3$ in $\bf T$.  From
Theorem~\ref{cubes} we know that such $x$ exist.  Let $n = |x|$ and
$p$ be the period, so that $n/p \geq 3$.  Then by considering the
first $3p$ symbols of $x$, which form a cube, we have by Theorem~\ref{cubes}
that $p = T_n$.  So it suffices to determine the largest $n$ 
corresponding to every $p$ of the form $T_n$.  We did this using the predicate

		


\begin{figure}[H]
\begin{center}
\includegraphics[width=6.5in]{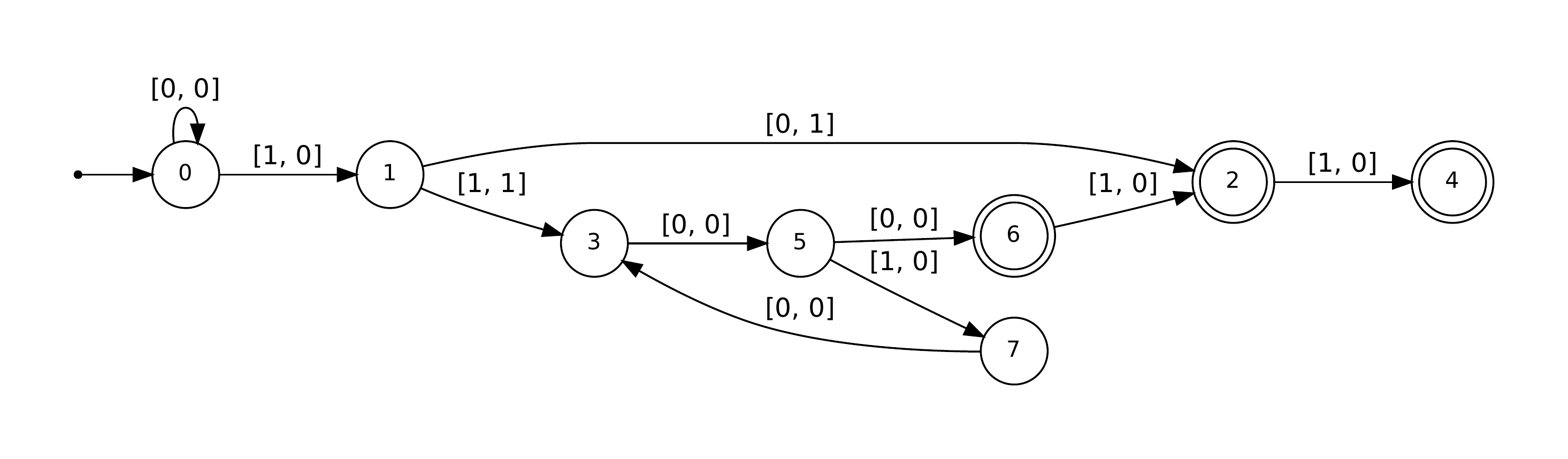}
\caption{Length $n$ of longest factors having period $p= T_n$ of Tribonacci
sequence}
\label{tribmp}
\end{center}
\end{figure}

From inspection of the automaton, we see that the maximum length
of a factor $n = U_j$ having period $p = T_j$, $j \geq 2$, is given by
$$
U_j = \begin{cases}
	2, & \text{if $j = 2$}; \\
	5, & \text{if $j = 3$}; \\
	[110(100)^{i-1} 0]_T, & \text{if $j = 3i+1 \geq 4$}; \\
	[110(100)^{i-1} 01]_T, & \text{if $j = 3i+2 \geq 5$}; \\
	[110(100)^{i-1} 011]_T, & \text{if $j = 3i+3 \geq 6$}.
	\end{cases}
$$
A tedious induction shows that $U_j$ satisfies the linear
recurrence $U_j = U_{j-1}+U_{j-2}+U_{j-3} + 3$ for
$j \geq 5$.  Hence we can write $U_j$ as a linear combination 
Tribonacci sequences and the constant sequence $1$, and solving
for the constants we get
$$ U_j = {5 \over 2}T_j + T_{j-1} + {1 \over 2} T_{j-2} - {3 \over 2}$$
for $j \geq 2$.

The critical exponent of $T$ is then $\sup_{j \geq 1} U_j/T_j$.
Now 
\begin{align*}
U_j/T_j &= {5 \over 2} + {{T_{j-1}} \over {T_j}} + {{T_{j-2}} \over {2T_j}}
- {3 \over {2T_j}} \\
&= {5 \over 2} + \alpha^{-1} + {1 \over 2} \alpha^{-2} + O(1.8^{-j}).
\end{align*}
Hence $U_j/T_j$ tends to $5/2 + \alpha^{-1} + {1\over 2}\alpha^{-2} = \rho$.
\end{proof}

   We can also ask the same sort of questions about the
{\it initial critical exponent} of a word $\bf w$,
which is the supremum over the exponents of all  prefixes of $\bf w$.

\begin{theorem}
The initial critical exponent of $\bf T$ is $\rho - 1$.
\end{theorem}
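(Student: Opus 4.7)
My plan is to mirror the preceding critical-exponent proof, restricted to prefixes. First, I formulate the predicate asserting that ``$p$ is a period of the length-$n$ prefix of $\bf T$'':
\[
  \mathrm{Per}(p,n) \ := \ \forall j\ (j+p<n)\implies {\bf T}[j]={\bf T}[j+p].
\]
To single out, for each $p$, the maximal $n$ with this property, I would run the decision procedure on
\[
  (p\geq 1)\wedge (n\geq p)\wedge \mathrm{Per}(p,n)\wedge \bigl({\bf T}[n-p]\neq {\bf T}[n]\bigr).
\]
The last conjunct is equivalent to $\neg \mathrm{Per}(p,n+1)$ given the others but introduces no extra quantifier, avoiding the kind of determinization blow-up encountered repeatedly earlier. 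The result should be a DFA accepting $\{(p,f(p))_T : p\geq 1\}$, where $f(p)$ is the maximum length of a prefix of $\bf T$ with period $p$.

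I then observe that the initial critical exponent equals $\sup_{p\geq 1} f(p)/p$: any prefix of length $n$ with smallest period $P$ satisfies $n/P\leq f(P)/P$, and conversely each $f(p)/p$ is itself the exponent of the corresponding prefix (with some, possibly not smallest, period). By direct analogy with the critical-exponent analysis---where the longest \emph{factor} of period $T_j$ had length $U_j$---I expect the produced automaton to display that the maximizing $p$ are exactly the Tribonacci numbers, with $f(T_j)=V_j$ obeying a Tribonacci-plus-constant linear recurrence, presumably $V_j = V_{j-1}+V_{j-2}+V_{j-3}+3$ for $j$ large, with small base cases read off the automaton. Solving as a linear combination of $(T_j,T_{j-1},T_{j-2},1)$ should give the closed form
\[
  V_j \ = \ \tfrac{3}{2}T_j + T_{j-1} + \tfrac{1}{2}T_{j-2} - \tfrac{3}{2},
\]
that is, exactly $U_j - T_j$; geometrically, a prefix must forfeit one full period of ``run-up'' at the left edge compared to an interior factor.

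To finish, I would invoke the estimate $T_n = c_1\alpha^n + O(0.75^n)$ from Section~\ref{fibrep} to compute
\[
  \frac{V_j}{T_j} \ = \ \tfrac{3}{2} + \frac{T_{j-1}}{T_j} + \frac{T_{j-2}}{2T_j} - \frac{3}{2T_j} \ \longrightarrow \ \tfrac{3}{2}+\alpha^{-1}+\tfrac{1}{2}\alpha^{-2} \ = \ \rho - 1,
\]
and, since this ratio is strictly increasing in $j$, the supremum equals $\rho-1$ but is not attained. The main obstacle I anticipate is the recurring state-explosion: even after the $\neg\mathrm{Per}$ simplification, the remaining $\forall j$ combined with two indexings into $\bf T$ may still overwhelm memory, in which case I would apply the standard substitution trick (quantifying over $k:=j+p$ rather than $j$) used in the square and unbordered-factor proofs above, and perhaps split the predicate so that the prover computes the relation $\mathrm{Per}$ first and only afterwards conjoins the inequality.
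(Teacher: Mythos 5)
Your approach matches the paper's: the paper likewise builds an automaton relating prefix lengths to their (least) periods via the decision procedure and then performs ``an analysis similar to that we gave above for the critical exponent,'' which is precisely your $\sup_{p\geq 1} f(p)/p$ computation with the limit $\tfrac{3}{2}+\alpha^{-1}+\tfrac{1}{2}\alpha^{-2}=\rho-1$. Your transposed formulation (maximal prefix length for each period $p$, enforced by the cutoff ${\bf T}[n-p]\neq{\bf T}[n]$) and the conjectured closed form $V_j=U_j-T_j$ are a reasonable filling-in of details the paper leaves implicit, subject as always to reading the actual recurrence off the computed automaton.
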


\begin{proof}
We create an automaton $M_{\rm ice}$ accepting the language
$$L = \{ (n,p)_T \ : \ {\bf T}[0..n-1] \text{ has least period } p \} .$$
It is depicted in Figure~\ref{ice} below.
An analysis similar to that we gave above for the critical exponent gives
the result.

\begin{figure}[H]
\begin{center}
\includegraphics[width=6.5in]{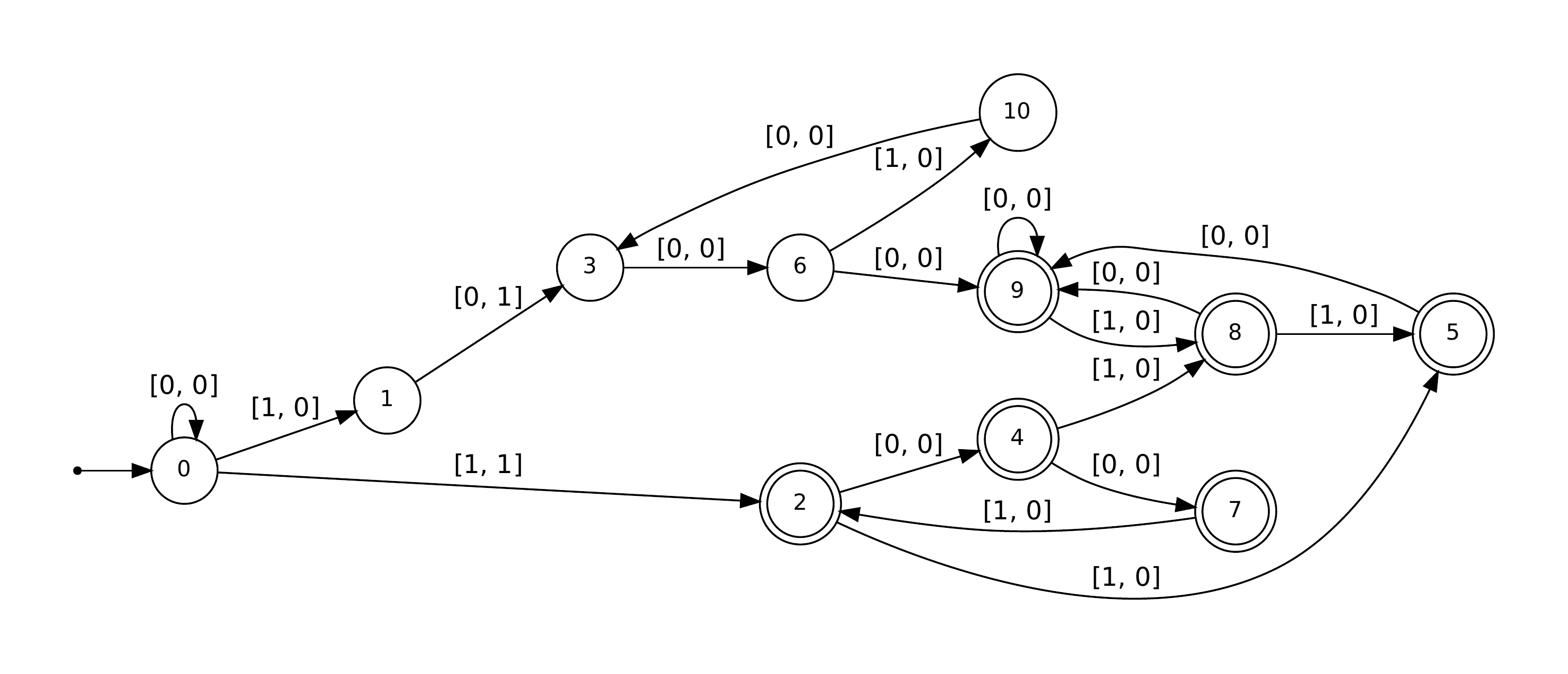}
\caption{Automaton accepting least periods of prefixes of length $n$}
\label{ice}
\end{center}
\end{figure}

\end{proof}

Recall that a primitive word is a non-power; that is, a word that cannot
be written in the form $x^n$ where
$n$ is an integer $\geq 2$.

\begin{theorem}
The only prefixes of the Tribonacci word that are powers are those
of length $2 T_n$ for $n \geq 5$.
\end{theorem}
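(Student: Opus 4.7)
My plan is to reduce the claim to a decidable predicate and invoke the decision procedure. A length-$n$ prefix of $\mathbf{T}$ is a power precisely when it equals $x^k$ for some nonempty $x$ and some integer $k \geq 2$, equivalently when it admits a period $p$ with $n = kp$ for some such $k$. This quantifier over $k$ is not Presburger-expressible, but the fourth-power theorem already proved in this section removes the difficulty: since $\mathbf{T}$ contains no fourth powers and any $k$th power with $k \geq 4$ contains a fourth power as a factor, the only possibilities are $k=2$ and $k=3$. Hence a prefix is a power iff it is a square or a cube, and it suffices to handle these two cases separately and take the disjunction.

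Accordingly, I would feed the prover the single predicate $\psi(n) := \psi_{\rm sq}(n)\ \OR\ \psi_{\rm cb}(n)$, where
$$
\psi_{\rm sq}(n)\,:=\,\exists p\,\bigl[\,(n = p+p)\AND(p\geq 1)\AND \forall j\,((j<p)\IMPLY{\bf T}[j]={\bf T}[j+p])\,\bigr],
$$
$$
\psi_{\rm cb}(n)\,:=\,\exists p\,\bigl[\,(n = p+p+p)\AND(p\geq 1)\AND \forall j\,((j<p+p)\IMPLY{\bf T}[j]={\bf T}[j+p])\,\bigr].
$$
The decision procedure will produce a DFA for $\{(n)_T : \psi(n)\}$, and the theorem reduces to inspecting this DFA and verifying it accepts exactly the language $100010^*$.

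To identify the accepted language with $\{2T_n : n\geq 5\}$, I would use the identity $2T_n = T_{n+1} + T_{n-3}$, which follows immediately from $T_{n+1} = T_n + T_{n-1} + T_{n-2}$ and $T_n = T_{n-1} + T_{n-2} + T_{n-3}$. In a length-$n$ canonical Tribonacci representation, positions $1$ through $n$ encode $T_{n+1},T_n,\dots,T_2$ respectively, so $2T_n$ has a $1$ in positions $1$ and $5$ and zeros elsewhere, giving $100010^{n-5}$ for $n\geq 5$. I expect the cube disjunct $\psi_{\rm cb}$ to contribute the empty language, since no cube of $\bf T$ starts at position $0$---something that can be read directly from Figure~\ref{cubeorders}---so all contributing prefixes come from $\psi_{\rm sq}$.

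The main obstacle, as repeatedly observed in this paper, will be controlling intermediate automaton blowup. The conjunction of a linear constraint coupling $n$ and $p$ with a universal quantifier indexing $\mathbf{T}$ at two positions tends to produce large NFAs before minimization. If the naive predicate does not run to completion, the standard substitution trick used already in the proofs of Theorems~\ref{squares} and \ref{cubes} applies: introduce $j' := p+j$ so that only two expressions index $\mathbf{T}$ at a time, and handle the range constraint on $j'$ separately.
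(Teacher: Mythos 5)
Your argument is correct, but it reaches the result by a genuinely different route from the paper. The paper encodes ``${\bf T}[0..n-1]$ is a power'' in a \emph{single} predicate using the classical characterization of non-primitive words as those equal to a proper rotation of themselves: there exists $d$ with $0<d<n$ such that ${\bf T}[j]={\bf T}[d+j]$ for all $j<n-d$ and ${\bf T}[k]={\bf T}[n-d+k]$ for all $k<d$ (i.e., the prefix factors as $uv=vu$ with $|u|=d$, which by Lyndon--Sch\"utzenberger holds iff $uv$ is a power). That formulation handles every exponent $k\geq 2$ uniformly and needs no prior facts about $\bf T$. You instead invoke the fourth-power-freeness of $\bf T$ to restrict to $k\in\{2,3\}$ and take the disjunction of a square predicate and a cube predicate; the reduction is sound, since any $x^k$ with $k\geq 4$ has $x^4$ as a prefix. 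Your predicates $\psi_{\rm sq}$ and $\psi_{\rm cb}$ correctly express ``the length-$n$ prefix is a square (resp.\ cube),'' your identity $2T_n=T_{n+1}+T_{n-3}$ and the resulting reading of $100010^*$ match the paper's, and your expectation that the cube disjunct contributes nothing is consistent with the paper's automaton of cube positions. What your version buys is two simpler queries, each with a single existential $p$ tied linearly to $n$, at the price of depending on an earlier theorem of the section; the paper's version is self-contained and a single query, at the price of a slightly more delicate predicate (one must take care to exclude the rotation amount $d=0$, for which the condition is vacuously true).
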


\begin{proof}
The predicate
$$
\exists d < n\ (\forall j<n-d \ {\bf T}[j]={\bf T}[d+j]) \ \wedge \  
	(\forall k<d \ {\bf T}[k] = {\bf T}[n-d+k])
$$
asserts that the prefix ${\bf T}[0..n-1]$ is a power.
When we run this through our program, the resulting automaton
accepts $100010^*$, which corresponds to $F_{n+1} + F_{n-3} = 2T_n$ for 
$n \geq 5$.  
\end{proof}

\section{Enumeration}
\label{enumer}

Mimicking the base-$k$ ideas in
\cite{Charlier&Rampersad&Shallit:2012}, we can also mechanically enumerate
many aspects of Tribonacci-automatic sequences.  We do this
by encoding the factors having the
property in terms of paths of an automaton.  This gives the
concept of {\it Tribonacci-regular sequence} 
Roughly speaking, a sequence
$(a(n))_{n \geq 0}$ taking values in $\Enn$ is Tribonacci-regular
if the set of sequences
$$ \{ (a([xw]_T)_{w \in \Sigma_2^*} \ : \ x \in \Sigma_2^* \} $$
is finitely generated.  Here we assume that $a([xw]_T)$ evaluates
to $0$ if $xw$ contains the word $111$.  Every Tribonacci-regular
sequence $(a(n))_{n \geq 0}$ has a {\it linear representation}
of the form $(u, \mu, v)$ where $u$ and $v$ are row and column
vectors, respectively, and $\mu:\Sigma_2 \rightarrow \Enn^{d \times d}$
is a matrix-valued morphism, where $\mu(0) = M_0$ and $\mu(1) = M_1$
are $d \times d$ matrices for some $d \geq 1$, such that
$$a(n) = u \cdot \mu(x) \cdot v$$
whenever $[x]_T = n$.  The {\it rank} of the representation is
the integer $d$.

Recall that if $\bf x$ is an infinite word, then the subword
complexity function $\rho_{\bf x} (n)$ counts the number of
distinct factors of length $n$.  Then, in analogy with
\cite[Thm.~27]{Charlier&Rampersad&Shallit:2012}, we have

\begin{theorem}
If $\bf x$ is Tribonacci-automatic, then the subword complexity
function of $\bf x$ is Tribonacci-regular.
\end{theorem}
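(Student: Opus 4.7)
The strategy mirrors the base-$k$ argument of Charlier, Rampersad, and Shallit \cite{Charlier&Rampersad&Shallit:2012}, Thm.~27. The key observation is that $\rho_{\bf x}(n)$ equals the number of indices $i$ at which the length-$n$ factor of $\bf x$ starting at position $i$ makes its \emph{first} appearance. I would therefore introduce the predicate
$$
N(n,i) \ := \ (\forall j < i)\, (\exists t < n)\ {\bf x}[i+t] \neq {\bf x}[j+t],
$$
so that $\rho_{\bf x}(n) = |\{\, i \geq 0 \st N(n,i) \,\}|$. Since $\bf x$ is Tribonacci-automatic, Procedure~\ref{proc:Fib-auto-decide} applied to $N$ produces a DFA $M$ over the alphabet $\Sigma_2^2$ whose language is $0^*\{(n,i)_T \st N(n,i)\}$ (after possibly reusing the substitution trick employed throughout Section~\ref{proofsf} to keep the intermediate automata manageable).

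Next, I would extract a linear representation of $\rho_{\bf x}$ directly from $M = (Q,\Sigma_2^2,\delta,q_0,F)$. For each digit $a \in \{0,1\}$ (to be read from the $n$-coordinate), define a $|Q| \times |Q|$ matrix $\mu(a)$ whose $(q,q')$ entry counts those $b \in \{0,1\}$ with $\delta(q,(a,b)) = q'$. Let $u$ be the row vector selecting the initial state and $v$ the characteristic column vector of $F$. Then $u \cdot \mu(a_1) \mu(a_2) \cdots \mu(a_k) \cdot v$ counts the number of length-$k$ words $b_1 b_2 \cdots b_k$ such that $[(a_1,b_1)(a_2,b_2)\cdots(a_k,b_k)]$ drives $M$ from $q_0$ into $F$. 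Feeding in $(n)_T$ thus sums over all $i$ whose Tribonacci expansion, suitably padded with leading zeros to match $|(n)_T|$, is accepted alongside $(n)_T$. This gives the desired linear representation and hence shows that $\rho_{\bf x}$ is Tribonacci-regular.

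The main obstacle will be handling the padding correctly. A pair $(n,i)$ with a short $i$-coordinate is counted once for each admissible amount of leading-zero padding in the $i$-slot, so $u \mu((n)_T) v$ may over-count, and moreover we must ensure that a short $n$-coordinate is not inadvertently under-counted (since $i$ can exceed $n$). I would handle this in the standard way: preprocess $M$ so that the set of accepted encodings of each $(n,i)$ is exactly the set of representations of common length $\geq \max(|(n)_T|,|(i)_T|)$ (for instance by intersecting with a regular language ruling out $111$ in each coordinate and then closing under left-padding of the $i$-coordinate by $0$'s up to the length of $(n)_T$, augmented with an auxiliary ``overflow'' sequence counting novel $i$ with $|(i)_T| > |(n)_T|$), and then taking an appropriate linear combination of the resulting representations. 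Since the class of Tribonacci-regular sequences is closed under sum, difference, and the operations induced by these padding adjustments, the end product is still a linear representation of $\rho_{\bf x}$, which completes the proof.
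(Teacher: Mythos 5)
Your proposal follows essentially the same route as the paper, which offers no detailed proof but simply invokes the analogy with Theorem~27 of Charlier--Rampersad--Shallit and then uses exactly your first-occurrence predicate $\forall j<i\ {\bf T}[i..i+n-1]\neq{\bf T}[j..j+n-1]$ to extract a linear representation by path-counting in the resulting DFA. Your treatment of the padding/overflow issue (the only real technicality, resolved in the standard way by noting that $u\,\mu(0)^t\,\mu((n)_T)\,v$ is nondecreasing and bounded, hence stabilizes for $t$ at most the number of states) is consistent with the cited argument, so the proof is correct.
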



Using our implementation, we can obtain a linear representation
of the subword complexity function for $\bf T$.  An obvious choice
is to use the language
$$ \{ (n,i)_T \ : \ \forall j < i \ {\bf T}[i..i+n-1] \not=
{\bf T}[j..j+n-1] \} ,$$  
based on a predicate
that expresses the assertion that the factor of length
$n$ beginning at position $i$ has never appeared before.
Then, for each $n$, the number of corresponding $i$ gives
$\rho_{\bf T}(n)$.

However, this does not run to completion in our implementation in
the allotted time and space.  Instead, let us
substitute $u = j+t$ and and $k = i-j$ to get the predicate
$$ \forall k\ (((k>0)\wedge (k\leq i)) \implies
(\exists u\ ((u\geq j) \wedge (u<n+j)\wedge ({\bf T}[u]\not= {\bf T}[u+k])))) .
$$
This predicate is close to the upper limit of what we can compute
using our program.  The largest intermediate automaton had
1230379 states and the program took 12323.82 seconds, giving us
a 
linear representation $(u, \mu, v)$ 
rank $22$.  When we minimize this using the algorithm
in \cite{Berstel&Reutenauer:2011} we get the rank-$12$ linear
representation
\begin{align*}
u &= [1\ 0 \ 0 \ 0 \ 0 \ 0 \ 0 \ 0 \ 0 \ 0 \ 0 \ 0 ] \\
M_0 &= \left[ \begin{array}{cccccccccccc}
1&  0&  0&  0&  0&  0&  0&  0&  0&  0&  0&  0\\
0&  0&  1&  0&  0&  0&  0&  0&  0&  0&  0&  0\\
0&  0&  0&  0&  1&  0&  0&  0&  0&  0&  0&  0\\
-1&  0&  1&  0&  1&  0&  0&  0&  0&  0&  0&  0\\
0&  0&  0&  0&  0&  0&  1&  0&  0&  0&  0&  0\\
-1&  0&  1&  0&  0&  0&  1&  0&  0&  0&  0&  0\\
-2&  0&  1&  0&  1&  0&  1&  0&  0&  0&  0&  0\\
-3&  0&  2&  0&  1&  0&  1&  0&  0&  0&  0&  0\\
-4&  0&  2&  0&  2&  0&  1&  0&  0&  0&  0&  0\\
-5&  0&  2&  0&  2&  0&  2&  0&  0&  0&  0&  0\\
-6&  0&  2&  0&  3&  0&  2&  0&  0&  0&  0&  0\\
-10&  0&  3&  0&  4&  0&  4&  0&  0&  0&  0&  0\\
\end{array}
\right] \\
M_1 &= \left[ \begin{array}{cccccccccccc}
0&1&0&0&0&0&0&0&0&0&0&0\\
0&0&0&1&0&0&0&0&0&0&0&0\\
0&0&0&0&0&1&0&0&0&0&0&0\\
0&0&0&0&0&0&0&0&0&0&0&0\\
0&0&0&0&0&0&0&1&0&0&0&0\\
0&0&0&0&0&0&0&0&1&0&0&0\\
0&0&0&0&0&0&0&0&0&1&0&0\\
0&0&0&0&0&0&0&0&0&0&1&0\\
0&0&0&0&0&0&0&0&0&0&0&0\\
0&0&0&0&0&0&0&0&0&0&0&1\\
0&0&0&0&0&0&0&0&0&0&0&0\\
0&0&0&0&0&0&0&0&0&0&0&0\\
\end{array}
\right] \\
v' &= [1\ 3\ 5\ 7\ 9\ 11\ 15\ 17\ 21\ 29\ 33\ 55]^R \\
\end{align*}

Comparing this to an independently-derived linear representation of
the function $2n+1$, we see they are the same.  From
this we get a well-known result 
(see, e.g., \cite[Thm.~7]{Droubay&Justin&Pirillo:2001}):

\begin{theorem}
The subword complexity function of $\bf T$ is $2n+1$.
\label{sturmcomp}
\end{theorem}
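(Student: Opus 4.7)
The plan is to leverage the linear representation $(u,\mu,v)$ of rank $12$ for $\rho_{\bf T}$ that was already extracted in the paragraph immediately preceding the theorem, and to prove the identity $\rho_{\bf T}(n) = 2n+1$ by showing that this representation is equivalent to a linear representation of the function $n \mapsto 2n+1$. Since two linear representations of a Tribonacci-regular sequence compute the same sequence if and only if their minimizations agree (up to a change of basis), the entire proof reduces to a finite comparison that can be carried out mechanically.

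First I would independently construct a Tribonacci-regular representation of the function $n \mapsto 2n+1$. The sequence $n \mapsto n$ itself is Tribonacci-regular: given a Tribonacci word $x = a_1 a_2 \cdots a_k$ with $[x]_T = n$, reading one more digit updates the pair $(T_{k+1}, T_{k+2})$ via the Tribonacci recurrence, so the map $n \mapsto n$ admits a linear representation of small rank built from the companion matrix of $x^3 - x^2 - x - 1$. The constant sequence $1$ is obviously rank $1$. By closure of Tribonacci-regular sequences under scalar multiplication and sum (the direct-sum construction on linear representations), we obtain an explicit linear representation $(u', \mu', v')$ for $2n+1$.

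Next I would apply the minimization algorithm of Berstel and Reutenauer to both $(u,\mu,v)$ and $(u',\mu',v')$. The paper already reports that minimizing the rank-$22$ representation produced by the decision procedure yields the rank-$12$ representation displayed in the excerpt; running the same procedure on $(u',\mu',v')$ will produce its minimal form. Two minimal linear representations of the same sequence are related by an invertible change-of-basis matrix $P$ satisfying $uP = u'$, $P^{-1} \mu(a) P = \mu'(a)$ for $a \in \Sigma_2$, and $P^{-1} v = v'$; the existence of such a $P$ can be checked by solving a finite linear system. Verifying that the minimizations coincide (or equivalently, that their difference has the zero sequence as output) completes the proof.

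The main obstacle is not conceptual but bookkeeping: one must be careful that the linear representation for $2n+1$ uses the same convention for the Tribonacci numeration (in particular, how leading zeros and occurrences of $111$ are handled) as the one extracted from the decision procedure, so that the two representations really are being compared as functions on the same input language. Once the conventions are aligned, the equivalence check is a routine linear-algebra computation on matrices of size at most $12 \times 12$, and it verifies $\rho_{\bf T}(n) = 2n+1$ for all $n \geq 0$.
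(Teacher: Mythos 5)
Your proposal matches the paper's argument: the paper minimizes the rank-$22$ representation produced by the decision procedure to the displayed rank-$12$ representation and then compares it with an independently derived linear representation of $n \mapsto 2n+1$, concluding they coincide. Your additional detail on constructing the representation of $2n+1$ and verifying equivalence via a change of basis is just a fleshed-out version of that same comparison.
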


We now turn to computing the exact number of square occurrences 
in the finite Tribonacci words $Y_n$.

To solve this using our approach, we first generalize the problem to
consider any length-$n$ prefix of $Y_n$, and not simply the prefixes
of length $T_n$.

The following predicate represents
the number of distinct squares in ${\bf T}[0..n-1]$:
\begin{multline*}
L_{\rm ds} :=
\{ (n,i,j)_T \ : \ (j \geq 1) \text{ and } (i+2j \leq n) \text{ and }
	{\bf T}[i..i+j-1] = {\bf T}[i+j..i+2j-1]  \\
\text{ and } \forall i' < i \ 
{\bf T}[i'..i'+2j-1] \not= {\bf T}[i..i+2j-1] \} .
\end{multline*}
This predicate asserts that ${\bf T}[i..i+2j-1]$ is a square 
occurring in ${\bf T}[0..n-1]$ and 
that furthermore it is the first occurrence of this particular
word in ${\bf T}[0..n-1]$.

The second represents the total number of occurrences of squares
in ${\bf T}[0..n-1]$:
$$ L_{\rm dos} := \{ (n,i,j)_T \ : \ (j \geq 1) \text{ and }
(i+2j \leq n) \text{ and }
        {\bf T}[i..i+j-1] = {\bf T}[i+j..i+2j-1] \} .$$
This predicate asserts that ${\bf T}[i..i+2j-1]$ is a square 
occurring in ${\bf T}[0..n-1]$.

Unfortunately, applying our enumeration method to this
suffers from the same problem as before, so we rewrite
it as
$$ (j \geq 1) \wedge \ (i+2j \leq n) \ \wedge 
	\forall u \ ((u \geq i)\wedge(u<i+j)) \implies {\bf T}[u] = {\bf T}[u+j] $$
When we compute the linear representation of the function counting the
number of such $i$ and $j$, we get a linear representation of
rank $63$.  Now we compute the minimal polynomial of $M_0$
which is $(x-1)^2 (x^2 + x + 1)^2 (x^3 - x^2 -x - 1)^2$.  
Solving a linear system in terms of the roots
(or, more accurately, in terms of the sequences $1$, $n$,
$T_n$, $T_{n-1}$, $T_{n-2}$, $nT_n$, $nT_{n-1}$, $n T_{n-2}$) gives

\begin{theorem}
The total number of occurrences of squares in the Tribonacci word
$Y_n$ is 
$$c(n) = {n \over {22}}(9 T_n - T_{n-1} - 5 T_{n-2}) + 
	{1 \over {44}} (-117 T_n + 30 T_{n-1} + 33 T_{n-2}) + n - {7 \over 4} 
	$$
for $n \geq 5$.
\end{theorem}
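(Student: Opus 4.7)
The plan is to apply the Tribonacci-regular enumeration technique described earlier in the section. The first step is to run Procedure~\ref{proc:Fib-auto-decide} on the rewritten predicate
$$ (j \geq 1) \ \wedge\ (i+2j \leq n)\ \wedge\ \forall u\ \bigl((u\geq i)\wedge(u<i+j)\bigr) \implies {\bf T}[u] = {\bf T}[u+j],$$
obtaining a DFA that accepts exactly the triples $(n,i,j)_T$ such that ${\bf T}[i\dotdot i+2j-1]$ is a square of order $j$ sitting inside the length-$n$ prefix ${\bf T}[0\dotdot n-1]$. Projecting the counting problem onto the first coordinate yields, after minimization via the Berstel--Reutenauer algorithm, a linear representation $(u,\mu,v)$ of rank $63$ for the function $g(m)$ counting the total number of square occurrences in ${\bf T}[0\dotdot m-1]$. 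Because $Y_n$ is the prefix of length $T_n$ and $(T_n)_T = 10^{n-2}$ for $n \geq 2$, we have $c(n) = g(T_n) = u \cdot M_1 M_0^{n-2} \cdot v$, where $M_0 = \mu(0)$ and $M_1 = \mu(1)$.

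The second step is to exploit the minimal polynomial of $M_0$, which is
$$(x-1)^2 (x^2 + x + 1)^2 (x^3 - x^2 - x - 1)^2.$$
By the theory of linear recurrences, every sequence of the form $w_1 M_0^n w_2$ lies in the span of $n^i \lambda^n$ for $\lambda$ a root of this polynomial and $i$ less than its multiplicity. This span is generated by $1$, $n$, $T_n$, $T_{n-1}$, $T_{n-2}$, $nT_n$, $nT_{n-1}$, $nT_{n-2}$ together with additional sequences depending on $n \bmod 3$ that come from the complex sixth roots of unity arising from $(x^2+x+1)^2$. The key observation, which is verified directly from the explicit $u$, $M_0$, $M_1$, $v$, is that the coefficients of $c(n)$ along these ``periodic'' basis vectors all vanish. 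Equivalently, $c(n)$ already satisfies the recurrence whose characteristic polynomial is $(x-1)^2 (x^3 - x^2 - x - 1)^2$.

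Once that reduction is in hand, it remains to compute $c(n)$ directly from the linear representation for eight consecutive small values (say $n = 5, 6, \ldots, 12$), and solve the resulting $8 \times 8$ system in the basis $\{1, n, T_n, T_{n-1}, T_{n-2}, nT_n, nT_{n-1}, nT_{n-2}\}$ to recover the rational coefficients stated in the theorem; correctness for all $n \geq 5$ then follows from uniqueness of the recurrence. The main obstacle is computational rather than conceptual: producing the rank-$63$ linear representation strains the prover (the intermediate automaton has over a million states), so writing the predicate in a form the procedure can handle without exhausting memory — in particular, introducing the auxiliary $u$ in place of a nested $i+t$ indexing — is the crucial step to even get a candidate closed form to fit.
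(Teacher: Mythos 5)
Your proposal follows essentially the same route as the paper: the same rewritten predicate with the auxiliary variable $u$, the same rank-$63$ linear representation, the same minimal polynomial of $M_0$, and the same linear-system fit in the basis $1$, $n$, $T_n$, $T_{n-1}$, $T_{n-2}$, $nT_n$, $nT_{n-1}$, $nT_{n-2}$. The only (harmless) imprecision is that the roots of $x^2+x+1$ are the primitive cube roots of unity, giving period-$3$ components, and these must indeed be checked to have zero coefficient here (unlike in the subsequent cube-counting theorem, where they survive as Iverson-bracket terms).
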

	 
In a similar way, we can count the occurrences of cubes in the 
finite Tribonacci word $Y_n$.
Here we get a linear representation of rank 46.  The minimal
polynomial for $M_0$ is 
$x^4(x^3-x^2-x-1)^2(x^2+x+1)^2(x-1)^2$.
Using analysis exactly like the square case, we easily find

\begin{theorem}
Let $C(n)$ denote the number of cube occurrences in the Tribonacci
word $Y_n$.  Then for $n \geq 3$ we have
\begin{multline*}
C(n) = {1 \over {44}} (T_n + 2T_{n-1} -33 T_{n-2}) +
{n \over {22}} (-6T_n + 8T_{n-1} + 7T_{n-2}) + {n \over 6} \\
-{1 \over 4} [n \equiv \modd{0} {3}] + {1 \over 12} [n \equiv \modd{1} {3}]
-{7 \over 12} [n \equiv \modd{2} {3}] .
\end{multline*}
Here $[P]$ is Iverson notation, and equals $1$ if $P$ holds and $0$
otherwise.
\end{theorem}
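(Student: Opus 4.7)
The plan is to mimic, step for step, the enumeration argument that was just used for counting squares, adapting everything to cubes. First I would set up the natural predicate that encodes a cube occurrence inside the length-$n$ prefix $Y_n$: letting $j$ be the period and $i$ the starting position,
$$ L_{\rm doc} := \{(n,i,j)_T \st (j \geq 1)\ \wedge\ (i+3j \leq n)\ \wedge\ {\bf T}[i..i+2j-1] = {\bf T}[i+j..i+3j-1]\}. $$
As in the square case, the three-variable indexing will be a problem for the solver, so I would rewrite the inner word-equality as
$$ \forall u\ ((u \geq i) \wedge (u < i + 2j)) \IMPLY {\bf T}[u] = {\bf T}[u+j], $$
which keeps the automaton growth manageable. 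Feeding this into the Tribonacci enumeration machinery (the analogue of \cite{Charlier&Rampersad&Shallit:2012} applied in the square theorem above) produces, for each fixed $n$, a count equal to $C(n)$, and the machinery also delivers a linear representation $(u,\mu,v)$ with $\mu(0) = M_0$, $\mu(1) = M_1$ of rank $46$.

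Next I would extract the closed form exactly as in the square proof. The key ingredient is the minimal polynomial of $M_0$, which factors as
$$ x^4 (x^3 - x^2 - x - 1)^2 (x^2 + x + 1)^2 (x - 1)^2. $$
The factor $x^4$ contributes only transient initial behavior and can be discarded for $n$ past a small threshold; $(x-1)^2$ contributes the basis sequences $1$ and $n$; $(x^3-x^2-x-1)^2$ contributes $T_n, T_{n-1}, T_{n-2}, nT_n, nT_{n-1}, nT_{n-2}$; and $(x^2+x+1)^2$, whose roots are the primitive cube roots of unity with multiplicity two, contributes the three period-$3$ indicator sequences $[n \equiv r \pmod 3]$ for $r = 0,1,2$ (the linear-times-periodic combinations reduce to these once one absorbs $n$ into the Tribonacci basis). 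So I have a twelve-element spanning family, and $C(n)$ must be an integer linear combination of these.

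To pin down the coefficients I would compute $C(n)$ directly from the automaton for a range of values $n = 3, 4, \ldots$ large enough to exceed the transient contribution of the $x^4$ factor, and then solve the resulting linear system. The coefficients come out to be the ones stated: $1/44$ and $n/22$ times the Tribonacci combinations, $n/6$ from the $(x-1)^2$ piece, and the three fractional coefficients $-1/4, 1/12, -7/12$ from the period-$3$ piece. Finally, one verifies that the starting index $n \geq 3$ is enough to be past the transient regime. The main obstacle is the same one that plagued the square case: getting the initial predicate into a form the solver can actually handle in memory, and then correctly separating the steady-state closed form from the contribution of the nilpotent block $x^4$ so that the formula is valid from the stated threshold $n \geq 3$ onward.
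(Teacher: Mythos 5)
Your proposal follows the paper's argument essentially verbatim: the same cube-occurrence predicate (rewritten with the single index $u$ to keep the automata tractable), the same rank-$46$ linear representation, the same minimal polynomial $x^4(x^3-x^2-x-1)^2(x^2+x+1)^2(x-1)^2$ for $M_0$, and the same step of solving a linear system over the basis sequences $1$, $n$, $T_n$, $T_{n-1}$, $T_{n-2}$, $nT_{n-j}$, and the period-$3$ indicators, exactly as in the square case. No substantive difference from the paper's (very terse) proof.
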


\section{Other words}

Of course, our technique can also prove things about words other than
$\bf T$.  For example, consider the binary Tribonacci word
${\bf b} = 0101010010101010101001010101 \cdots $ obtained from $\bf T$ by mapping each letter $i$ to
$\min(i,1)$.  

\begin{theorem}
The critical exponent of $\bf b$ is $13/2$.
\end{theorem}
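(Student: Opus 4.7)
The plan is to mirror the approach used for the critical exponent of $\mathbf{T}$. First, I would observe that $\mathbf{b}$ is itself Tribonacci-automatic: the map $i \mapsto \min(i,1)$ is a coding of the output alphabet, so applying it to the 3-state DFAO of Figure~\ref{tribonaccis} (relabelling the output $2$ as $1$) yields a DFAO generating $\mathbf{b}$. Hence Procedure~\ref{proc:Fib-auto-decide} applies verbatim to first-order statements indexing into $\mathbf{b}$.

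Next, in exact analogy with Figure~\ref{tribmp}, I will construct a DFA accepting
$$ \{(p,n)_T \ : \ n \text{ is the largest length of a factor of } \mathbf{b} \text{ with period } p \} $$
by conjoining the natural existence predicate
$$ \exists i \ \forall j \ (i \leq j < i + n - p) \implies \mathbf{b}[j] = \mathbf{b}[j+p] $$
with a maximality clause expressing that no $n' > n$ admits such an $i$. As in many preceding proofs, I expect to need variable substitutions (for example, introducing $u = i + j$ to avoid three distinct indices into $\mathbf{b}$ appearing simultaneously) in order to keep intermediate NFAs small enough to determinize.

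Having obtained this DFA, the critical exponent is $\sup_{(p,n) \in L} n/p$. The fact that the stated value $13/2$ is rational (unlike the irrational $\rho$ for $\mathbf{T}$) signals that the supremum is attained, or realised in the limit along a family of $(p,n)$ pairs whose ratio is eventually constant. I would read that family off the cycle structure of the DFA, derive closed-form expressions for $p$ and $n$ along it (likely as fixed linear combinations of Tribonacci numbers, by the same linear-recurrence argument used for the critical exponent of $\mathbf{T}$), verify that $n/p = 13/2$ on this family, and check that no accepted pair exceeds $13/2$.

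The principal obstacle will be computational blow-up. The predicate is of comparable complexity to those that already strained the implementation (palindromes, Lyndon factors, unbordered factors), and the maximality clause contributes a further universal quantifier. Careful rephrasing will be essential; one plausible tactic is to first build the automaton for the mere existence of a factor of length $n$ with period $p$, and then extract pointwise maxima over $n$ by a separate projection/subset construction rather than by inlining a second quantifier. Once the DFA is in hand, verifying arithmetically that $\sup n/p = 13/2$ should be routine.
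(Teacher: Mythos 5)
Your proposal is correct and follows essentially the same route as the paper: the paper's (very terse) proof simply runs the decision procedure on $\bf b$ (which is Tribonacci-automatic via the coding you describe) to verify that $(13/2)$-powers occur and nothing larger does, noting that these maximal powers arise only from factors of period $2$. Your extra machinery (the full $(p,n)$ automaton with a maximality clause and asymptotic analysis along cycles) is a valid but heavier version of the same idea; since the supremum here is attained at the finite pair $p=2$, $n=13$, no limiting argument is actually needed.
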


\begin{proof}
We use our method to verify that $\bf b$ has $(13/2)$-powers and no larger
ones.  (These powers arise only from words of period $2$.)
\end{proof}

\section{Abelian properties}

We can derive some results about the abelian properties of the
Tribonacci word $\bf T$ by proving the analogue of 
Theorem 63 of \cite{Du&Mousavi&Schaeffer&Shallit:2014}:

\begin{theorem}
Let $n$ be a non-negative integer and let
$e_1 e_2 \cdots e_j$ be a Tribonacci representation of $n$,
possibly with leading zeros, with $j \geq 3$.  Then
\begin{itemize}
\item[(a)] $|{\bf T}[0..n-1]|_0 = [e_1 e_2 \cdots e_{j-1}]_T + e_j$ .
\item[(b)] $|{\bf T}[0..n-1]|_1 = [e_1 e_2 \cdots e_{j-2}]_T + e_{j-1}$ .
\item[(c)] $|{\bf T}[0..n-1]|_2 = [e_1 e_2 \cdots e_{j-3}]_T + e_{j-2}$.
\end{itemize}
\label{tribab}
\end{theorem}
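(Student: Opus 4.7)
The plan is strong induction on $n$, supported by two preliminary results: explicit letter counts in the finite Tribonacci words $Y_n$, and a self-similarity lemma for $\bf T$.

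First I would establish that $|Y_n|_0 = T_{n-1}$, $|Y_n|_1 = T_{n-2}$, $|Y_n|_2 = T_{n-3}$ for $n \geq 3$, by routine induction using $Y_n = Y_{n-1}Y_{n-2}Y_{n-3}$ and $T_n = T_{n-1}+T_{n-2}+T_{n-3}$, with the base cases $Y_3 = 01$, $Y_4 = 0102$, $Y_5 = 0102010$ checked by hand.

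Next I would prove the self-similarity lemma: for every $j \geq 3$ and every $m$ with $0 \leq m \leq T_j + T_{j-1}$,
$$ {\bf T}[T_{j+1} \dotdot T_{j+1}+m-1] = {\bf T}[0\dotdot m-1]. $$
Since $Y_{j+1} = Y_j Y_{j-1} Y_{j-2}$ is a prefix of $\bf T$, we have ${\bf T}[0\dotdot T_j + T_{j-1} - 1] = Y_j Y_{j-1}$; since $Y_{j+2} = Y_{j+1} Y_j Y_{j-1}$ is also a prefix of $\bf T$, we have ${\bf T}[T_{j+1}\dotdot T_{j+2}-1] = Y_j Y_{j-1}$ as well; any length-$m$ prefix of this common word agrees on both sides.

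For the main induction on $n$, the base case checks all $n \leq T_5 = 7$ directly against the claimed formulas. For the inductive step, let $e_1 e_2 \cdots e_j$ ($j \geq 3$, no factor $111$) be a representation of $n$. If $e_1 = 0$ and $j \geq 4$, the value of $n$ and each prefix-sum $[e_1\cdots e_{j-c}]_T$ is unchanged by stripping the leading zero, so the inductive hypothesis applied to the length-$(j-1)$ representation $e_2 \cdots e_j$ gives the claim. If $e_1 = 1$ and $j \geq 4$, set $m := n - T_{j+1}$; since $T_{j+1} \leq n < T_{j+2}$, we have $0 \leq m < T_j + T_{j-1}$, so the self-similarity lemma yields
$$ |{\bf T}[0\dotdot n-1]|_c = |Y_{j+1}|_c + |{\bf T}[0\dotdot m-1]|_c $$
for $c \in \{0,1,2\}$. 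The integer $m<n$ has the length-$(j-1)$ representation $e_2\cdots e_j$, so the inductive hypothesis applies; combining with the values of $|Y_{j+1}|_c$ from the first step and the identity $[1\, e_2\cdots e_k]_T = T_{k+1} + [e_2\cdots e_k]_T$ (immediate from the definition of $[\cdot]_T$), the three claimed identities drop out.

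The main obstacle will be the careful index bookkeeping in this last algebraic step, since the place values shift when one passes between the $j$-digit and $(j-1)$-digit representations, and the shift has to be tracked separately in each of the three identities (a), (b), (c). Once the self-similarity lemma is in place, however, this bookkeeping is routine.
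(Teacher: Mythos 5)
Your proof is correct and follows the same route the paper intends: the paper's proof is only the one-line remark ``by induction, in analogy with [the Fibonacci analogue],'' and your argument is exactly that induction, carried out in full (the letter counts $|Y_n|_0=T_{n-1}$, $|Y_n|_1=T_{n-2}$, $|Y_n|_2=T_{n-3}$, the self-similarity of $\bf T$ at position $T_{j+1}$ up to length $T_j+T_{j-1}$, and the digit-stripping step using $n<T_{j+2}$ so that $m=n-T_{j+1}$ falls within the lemma's range). The details all check out, so this is a valid filling-in of the proof the paper merely sketches.
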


\begin{proof}
By induction, in analogy with the proof of
\cite[Theorem 63]{Du&Mousavi&Schaeffer&Shallit:2014}.
\end{proof}

Recall that the Parikh vector $\psi(x)$ of a word $x$ over an
ordered alphabet $\Sigma = \lbrace a_1, a_2, \ldots, a_k \rbrace$
is defined to be $(|x|_{a_1}, \ldots, |x|_{a_k})$,
the number of occurrences of each letter in $x$.
Recall that the abelian complexity function $\rho_{\bf w}^{\rm ab} (n)$ counts
the number of distinct Parikh vectors of the length-$n$ factors of an
infinite word $\bf w$.

Using Theorem~\ref{tribab} we get another proof of a recent
result of Turek \cite{Turek:2013}.

\begin{corollary}
The abelian complexity function of $\bf T$ is Tribonacci-regular.
\end{corollary}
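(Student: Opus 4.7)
The plan is to express the abelian complexity function as an enumeration of first occurrences of Parikh vectors, directly paralleling the treatment of subword complexity that led to Theorem~\ref{sturmcomp}.

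The first step is to promote Theorem~\ref{tribab} into a statement about Tribonacci-synchronized functions: for each $a \in \{0,1,2\}$, I claim the prefix-count function $f_a(n) := |{\bf T}[0..n-1]|_a$ has the property that the graph $\{(n,m)_T : m = f_a(n)\}$ is accepted by a DFA reading the two representations synchronously. Indeed, Theorem~\ref{tribab}(a)--(c) expresses $f_a(n)$ in terms of the Tribonacci representation of $n$ by deleting the last one, two, or three digits and adding one recovered bit; such a fixed-shift-plus-bounded-correction is readily simulated by a pair-reading transducer carrying at most three digits of buffer.

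Once each $f_a$ is known to be synchronized, the predicate
$$
P(n,i,j) \;\equiv\; \bigwedge_{a \in \{0,1,2\}} f_a(i+n) - f_a(i) \;=\; f_a(j+n) - f_a(j)
$$
---which asserts that the factors ${\bf T}[i..i+n-1]$ and ${\bf T}[j..j+n-1]$ have equal Parikh vectors---is first-order definable using Tribonacci addition together with indexing into the synchronized $f_a$. The abelian complexity is then
$$
\rho^{\rm ab}_{\bf T}(n) \;=\; \#\{\, i \geq 0 \;:\; \forall j < i,\ \neg P(n,i,j)\,\},
$$
which picks out exactly one representative per abelian equivalence class of length-$n$ factors.

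Finally I would feed this counting statement into the enumeration machinery of Section~\ref{enumer}, exactly as was done for square and cube counts: because the predicate inside the count is first-order definable in our decidable theory, the resulting counting function admits a linear representation and is therefore Tribonacci-regular. The main obstacle is the first step---rigorously building the synchronized automata for the $f_a$ from Theorem~\ref{tribab}. This is a direct Tribonacci analogue of \cite[Theorem~63]{Du&Mousavi&Schaeffer&Shallit:2014}, and I expect the transducer construction to be essentially bookkeeping once the ``delayed digit-stream'' viewpoint is adopted.
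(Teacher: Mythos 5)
Your proposal follows essentially the same route as the paper: it uses Theorem~\ref{tribab} to obtain a synchronized automaton for the Parikh vectors of prefixes (the paper's $\tabb$, which has 32 states), expresses equality of Parikh vectors of two length-$n$ factors via differences of prefix counts, counts first occurrences, and then invokes the enumeration machinery of Section~\ref{enumer}. The only slight gloss is your claim that a bounded-buffer transducer suffices for the shift-plus-correction in Theorem~\ref{tribab}: adding the recovered digit can propagate carries arbitrarily far in Tribonacci representation, but synchronization still holds because addition itself is automatic (the $M_{\rm add}$ automaton), which is exactly how the paper justifies $\tabb$.
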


\begin{proof}
First, from Theorem~\ref{tribab} there exists an automaton $\tabb$
such that $(n,i,j,k)_T$ is accepted iff $n = \psi({\bf T}[0..n-1])$.
In fact, such an automaton has 32 states.

Using this automaton, we can
create a predicate $P(n,i)$ such that the
number of $i$ for which $P(n,i)$ is true equals $\rho_{\bf T}^{\rm ab}
(n)$.  For this we assert that $i$ is the least index at which
we find an occurrence of the Parikh vector of ${\bf T}[i..i+n-1]$:
\begin{multline*}
\forall i' < i \ \exists a_0, a_1, a_2, b_0,b_1, b_2, 
c_0,c_1,c_2, d_0, d_1, d_2 \\
\tabb(i+n,a_0,a_1,a_2) \ \wedge \ \tabb(i,b_0,b_1,b_2) \ \wedge \ 
\tabb(i'+n,c_0,c_1,c_2) \ \wedge \ \tabb(i',d_0,d_1,d_2) \ \wedge \  \\
((a_0 - b_0 \not= c_0 - d_0) \ \vee \ 
(a_1 - b_1 \not= c_1 - d_1) \ \vee \ 
(a_2 - b_2 \not= c_2 - d_2)) .
\end{multline*}
\end{proof}

\begin{remark}
Note that exactly the same proof would work for any word and
numeration system
where the Parikh vector of prefixes of length $n$ is ``synchronized'' with
$n$.
\end{remark}

\begin{remark}
In principle we could mechanically compute the Tribonacci-regular representation
of the abelian complexity function using this technique,
but with our current implementation
this is not computationally feasible.
\end{remark}

\begin{theorem}
Any morphic image of the Tribonacci word is Tribonacci-automatic.
\end{theorem}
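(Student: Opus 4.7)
The plan is to build a DFAO generating $h({\bf T})$ by invoking Procedure~\ref{proc:Fib-auto-decide}. Write $\ell_a := |h(a)|$ for each $a \in \{0,1,2\}$ and set
$$f(n) := |h({\bf T}[0..n-1])| = \ell_0 |{\bf T}[0..n-1]|_0 + \ell_1 |{\bf T}[0..n-1]|_1 + \ell_2 |{\bf T}[0..n-1]|_2.$$
Every position $m$ of $h({\bf T})$ admits a unique decomposition $m = f(n) + k$ with $0 \leq k < \ell_{{\bf T}[n]}$, and then $h({\bf T})[m]$ is the $k$-th letter of $h({\bf T}[n])$.

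The first step is to show that $f$ is Tribonacci-synchronized, i.e., the relation $\{(n,f(n))_T\}$ is accepted by a DFA. By Theorem~\ref{tribab}, the automaton $\tabb$ already recognizes the Parikh-vector map $n \mapsto \psi({\bf T}[0..n-1])$. Since Tribonacci-synchronized relations are closed under addition (using $M_{\rm add}$ of Section~\ref{fibrep}) and under multiplication by a fixed integer constant (iterated addition), a DFA for $\{(n,f(n))_T\}$ can be assembled from $\tabb$ and $M_{\rm add}$.

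For each target letter $c \in \Delta$, I would then write the first-order predicate
$$
\varphi_c(m) \;:=\; \bigvee_{a \in \{0,1,2\}}\; \bigvee_{\substack{0 \leq k < \ell_a \\ h(a)[k] = c}}\; \exists n\ \bigl(m = f(n) + k \,\wedge\, {\bf T}[n] = a\bigr).
$$
Each outer disjunction is a finite list determined by $h$, and $m = f(n) + k$ is expressible using $f$'s synchronized DFA together with $M_{\rm add}$. The predicate thus lives in $\Th(\Enn,0,1,+)$ augmented with indexing into the Tribonacci-automatic word ${\bf T}$, so Procedure~\ref{proc:Fib-auto-decide} yields a DFA $A_c$ accepting $\{(m)_T : h({\bf T})[m] = c\}$. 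The languages $L(A_c)$ partition $\Enn$, so combining them produces a DFAO for $h({\bf T})$.

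The main obstacle is the routine but essential verification that Tribonacci-synchronized relations are closed under non-negative integer linear combinations; this reduces to iterated composition with $M_{\rm add}$. A minor wrinkle is the erasing case $\ell_a = 0$: the corresponding disjunct contributes no positions, so the construction remains valid provided $h({\bf T})$ is infinite, i.e., some letter occurring infinitely often in ${\bf T}$ has $\ell_a > 0$.
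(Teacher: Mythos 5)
Your proposal is correct and takes essentially the same route the paper intends: the paper's ``proof'' is only a pointer to the analogous Corollary 69 of \cite{Du&Mousavi&Schaeffer&Shallit:2014}, which likewise rests on the synchronization of the prefix Parikh vector (here Theorem~\ref{tribab}) to make the length function $f(n)=|h({\bf T}[0..n-1])|$ Tribonacci-synchronized and then reads off each letter of $h({\bf T})$ via a first-order predicate fed to the decision procedure. You have simply supplied the details that the paper leaves to the analogy, including the correct handling of the erasing case.
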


\begin{proof}
In analogy with Corollary 69 of \cite{Du&Mousavi&Schaeffer&Shallit:2014}.
\end{proof}

\section{Things we could not do yet}

There are a number of things we have not succeeded in computing with
our prover because it ran out of space.  These include

\begin{itemize}
\item mirror invariance of $\bf T$ (that is, if $x$ is a finite factor
then so is $x^R$);

\item Counting the number of special factors of length $n$ (although it can be deduced
from the subword complexity function);

\item statistics about, e.g, lengths of squares, cubes, etc.,
in the ``flipped'' Tribonacci sequence
\cite{Rosema&Tijdeman:2005}, the fixed point of
$ 0 \rightarrow 01$, $1 \rightarrow 20$, $2 \rightarrow 0$;

\item recurrence properties of the Tribonacci word;

\item counting the number of distinct squares (not occurrences) in
the finite Tribonacci word $Y_n$.

\item abelian complexity of the Tribonacci word.
\end{itemize}

In the future, an improved implementation may succeed in resolving
these in a mechanical fashion.

\section{Details about our implementation}

Our program is written in JAVA, and was developed using the
{\tt Eclipse} development environment.\footnote{Available from {\tt http://www.eclipse.org/ide/} .} 
We used the {\tt dk.brics.automaton}
package, developed by Anders M{\o}ller at Aarhus University, for
automaton minimization.\footnote{Available from {\tt http://www.brics.dk/automaton/} .}
{\tt Maple 15} was used
to compute characteristic polynomials.\footnote{Available from {\tt http://www.maplesoft.com} .}
The {\tt GraphViz} package was used to display automata.\footnote{Available
from {\tt http://www.graphviz.org} .}  We used a program written in
APL X\footnote{Available from {\tt http://www.microapl.co.uk/apl/ } .}
to implement minimization of linear representations. 

Our program consists of about 2000 lines of code.  We used
Hopcroft's algorithm for DFA minimization.

A user interface is provided to enter queries in a language very
similar to the language of first-order logic. The intermediate and
final result of a query are all automata. At every intermediate step,
we chose to do minimization and determinization, if necessary. Each
automaton accepts tuples of integers in the numeration system of choice.
The built-in numeration systems are ordinary base-$k$ representations,
Fibonacci base, and Tribonacci base.
However, the program can be used with any numeration
system for which an automaton for addition and ordering can be
provided. These numeration system-specific automata can be declared in
text files following a simple syntax. For the automaton resulting from a 
query it is always guaranteed that if a tuple $t$ of integers is
accepted, all tuples obtained from $t$ by addition or truncation of
leading zeros are also accepted. In Tribonacci representation, we make sure that
the accepting integers do not contain three consecutive $1$'s.

The source code and manual will soon be available for free download.


\section{Acknowledgments}

We are very grateful to Amy Glen for her recommendations and advice.

\newcommand{\noopsort}[1]{} \newcommand{\singleletter}[1]{#1}


\begin{thebibliography}{10}

\bibitem{Allouche&Rampersad&Shallit:2009}
J.-P. Allouche, N.~Rampersad, and J.~Shallit.
\newblock Periodicity, repetitions, and orbits of an automatic sequence.
\newblock {\em Theoret. Comput. Sci.} {\bf 410} (2009), 2795--2803.

\bibitem{Allouche&Shallit:2003}
J.-P. Allouche and J.~Shallit.
\newblock {\em Automatic Sequences: Theory, Applications, Generalizations}.
\newblock Cambridge University Press, 2003.

\bibitem{Barcucci&Belanger&Brlek:2004}
E.~Barcucci, L.~{B\'elanger}, and S.~Brlek.
\newblock On {Tribonacci} sequences.
\newblock {\em Fibonacci Quart.} {\bf 42} (2004), 314--319.

\bibitem{Berstel&Reutenauer:2011}
J.~Berstel and C.~Reutenauer.
\newblock {\em Noncommutative Rational Series with Applications}, Vol. 137 of
  {\em Encylopedia of Mathematics and Its Applications}.
\newblock Cambridge University Press, 2011.

\bibitem{Bruyere&Hansel:1997}
V.~{Bruy\`ere} and G.~Hansel.
\newblock Bertrand numeration systems and recognizability.
\newblock {\em Theoret. Comput. Sci.} {\bf 181} (1997), 17--43.

\bibitem{Bruyere&Hansel&Michaux&Villemaire:1994}
V.~{Bruy\`ere}, G.~Hansel, C.~Michaux, and R.~Villemaire.
\newblock Logic and $p$-recognizable sets of integers.
\newblock {\em Bull. Belgian Math. Soc.} {\bf 1} (1994), 191--238.
\newblock Corrigendum, {\it Bull.\ Belg.\ Math.\ Soc.} {\bf 1} (1994), 577.

\bibitem{Buchi:1960}
J.~R. {B\"uchi}.
\newblock Weak secord-order arithmetic and finite automata.
\newblock {\em Zeitschrift {f\"ur} mathematische Logik und Grundlagen der
  Mathematik} {\bf 6} (1960), 66--92.
\newblock Reprinted in S. Mac Lane and D. Siefkes, eds., {\it The Collected
  Works of J. Richard {B\"uchi}}, Springer-Verlag, 1990, pp.\ 398--424.

\bibitem{Carlitz&Scoville&Hoggatt:1972}
L.~Carlitz, R.~Scoville, and V.~E. Hoggatt, Jr.
\newblock Fibonacci representations of higher order.
\newblock {\em Fibonacci Quart.} {\bf 10} (1972), 43--69,94.

\bibitem{Charlier&Rampersad&Shallit:2012}
E.~Charlier, N.~Rampersad, and J.~Shallit.
\newblock Enumeration and decidable properties of automatic sequences.
\newblock {\em Internat. J. Found. Comp. Sci.} {\bf 23} (2012), 1035--1066.

\bibitem{Chekhova&Hubert&Messaoudi:2001}
N.~Chekhova, P.~Hubert, and A.~Messaoudi.
\newblock {Propri\'et\'es combinatoires, ergodiques et arithm\'etiques de la
  substitution de Tribonacci}.
\newblock {\em J. {Th\'eorie} Nombres Bordeaux} {\bf 13} (2001), 371--394.

\bibitem{Christou&Crochemore&Iliopoulos:2012}
M.~Christou, M.~Crochemore, and C.~S. Iliopoulos.
\newblock Quasiperiodicities in {Fibonacci} strings.
\newblock To appear in {\it Ars Combinatoria}. Preprint available at {\tt
  http://arxiv.org/abs/1201.6162}, 2012.

\bibitem{Cobham:1972}
A.~Cobham.
\newblock Uniform tag sequences.
\newblock {\em Math. Systems Theory} {\bf 6} (1972), 164--192.

\bibitem{Droubay&Justin&Pirillo:2001}
X.~Droubay, J.~Justin, and G.~Pirillo.
\newblock Episturmian words and some constructions of {de Luca} and {Rauzy}.
\newblock {\em Theoret. Comput. Sci.} {\bf 255} (2001), 539--553.

\bibitem{Duchene&Rigo:2008}
E.~{Duch\^ene} and M.~Rigo.
\newblock A morphic approach to combinatorial games: the {Tribonacci} case.
\newblock {\em RAIRO Inform. Th\'eor. App.} {\bf 42} (2008), 375--393.

\bibitem{Du&Mousavi&Schaeffer&Shallit:2014}
C.~F. Du, H.~Mousavi, L.~Schaeffer, and J.~Shallit.
\newblock Decision algorithms for {Fibonacci}-automatic words, with
  applications to pattern avoidance.
\newblock Available at \url{http://arxiv.org/abs/1406.0670}, 2014.

\bibitem{Frougny:1992a}
C.~Frougny.
\newblock Representations of numbers and finite automata.
\newblock {\em Math. Systems Theory} {\bf 25} ({\noopsort{1992a}}1992), 37--60.

\bibitem{Frougny&Solomyak:1996}
C.~Frougny and B.~Solomyak.
\newblock On representation of integers in linear numeration systems.
\newblock In M.~Pollicott and K.~Schmidt, editors, {\em Ergodic Theory of
  {$\Zee^d$} Actions (Warwick, 1993--1994)}, Vol. 228 of {\em London
  Mathematical Society Lecture Note Series}, pp.  345--368. Cambridge
  University Press, 1996.

\bibitem{Glen:2006}
A.~Glen.
\newblock {\em On Sturmian and Episturmian Words, and Related Topics}.
\newblock PhD thesis, University of Adelaide, 2006.

\bibitem{Glen:2007}
A.~Glen.
\newblock Powers in a class of $a$-strict episturmian words.
\newblock {\em Theoret. Comput. Sci.} {\bf 380} (2007), 330--354.

\bibitem{Glen&Justin:2009}
A.~Glen and J.~Justin.
\newblock Episturmian words: a survey.
\newblock {\em RAIRO Inform. Th\'eor. App.} {\bf 43} (2009), 402--433.

\bibitem{Glen&Leve&Richomme:2008}
A.~Glen, F.~{Lev\'e}, and G.~Richomme.
\newblock Quasiperiodic and {Lyndon} episturmian words.
\newblock {\em Theoret. Comput. Sci.} {\bf 409} (2008), 578--600.

\bibitem{Goc&Henshall&Shallit:2012}
D.~Goc, D.~Henshall, and J.~Shallit.
\newblock Automatic theorem-proving in combinatorics on words.
\newblock In N.~Moreira and R.~Reis, editors, {\em CIAA 2012}, Vol. 7381 of
  {\em Lecture Notes in Computer Science}, pp.  180--191. Springer-Verlag,
  2012.

\bibitem{Goc&Mousavi&Shallit:2013}
D.~Goc, H.~Mousavi, and J.~Shallit.
\newblock On the number of unbordered factors.
\newblock In A.-H. Dediu, C.~Martin-Vide, and B.~Truthe, editors, {\em LATA
  2013}, Vol. 7810 of {\em Lecture Notes in Computer Science}, pp.  299--310.
  Springer-Verlag, 2013.

\bibitem{Goc&Saari&Shallit:2013}
D.~Goc, K.~Saari, and J.~Shallit.
\newblock Primitive words and {Lyndon} words in automatic and linearly
  recurrent sequences.
\newblock In A.-H. Dediu, C.~Martin-Vide, and B.~Truthe, editors, {\em LATA
  2013}, Vol. 7810 of {\em Lecture Notes in Computer Science}, pp.  311--322.
  Springer-Verlag, 2013.

\bibitem{Goc&Schaeffer&Shallit:2013}
D.~Goc, L.~Schaeffer, and J.~Shallit.
\newblock The subword complexity of $k$-automatic sequences is
  $k$-synchronized.
\newblock In M.-P. B\'eal and O.~Carton, editors, {\em DLT 2013}, Vol. 7907 of
  {\em Lecture Notes in Computer Science}, pp.  252--263. Springer-Verlag,
  2013.

\bibitem{Hales:2008}
T.~C. Hales.
\newblock Formal proof.
\newblock {\em Notices Amer. Math. Soc.} {\bf 55}(11) (2008), 1370--1380.

\bibitem{Justin&Pirillo:2002}
J.~Justin and G.~Pirillo.
\newblock Episturmian words and episturmian morphisms.
\newblock {\em Theoret. Comput. Sci.} {\bf 276} (2002), 281--313.

\bibitem{Konev&Lisitsa:2014}
B.~Konev and A.~Lisitsa.
\newblock A {SAT} attack on the {Erd\H{os}} discrepancy problem.
\newblock Preprint. Available at http://arxiv.org/abs/1402.2184, 2014.

\bibitem{Presburger:1929}
M.~Presburger.
\newblock {\"Uber} die {Volst\"andigkeit} eines gewissen {Systems} der
  {Arithmetik} ganzer {Zahlen}, in welchem die {Addition} als einzige
  {Operation} hervortritt.
\newblock In {\em Sparawozdanie z I Kongresu matematyk\'ow kraj\'ow
  slowianskich}, pp.  92--101, 395. Warsaw, 1929.

\bibitem{Presburger:1991}
M.~Presburger.
\newblock On the completeness of a certain system of arithmetic of whole
  numbers in which addition occurs as the only operation.
\newblock {\em Hist. Phil. Logic} {\bf 12} (1991), 225--233.

\bibitem{Richomme&Saari&Zamboni:2010}
G.~Richomme, K.~Saari, and L.~Q. Zamboni.
\newblock Balance and {Abelian} complexity of the {Tribonacci} word.
\newblock {\em Adv. in Appl. Math.} {\bf 45} (2010), 212--231.

\bibitem{Rosema&Tijdeman:2005}
S.~W. Rosema and R.~Tijdeman.
\newblock The {Tribonacci} substitution.
\newblock {\em INTEGERS: Elect. J. of Combin. Number Theory} {\bf 5}(3) (2005),
  \#A13 (electronic), \verb!http://www.integers-ejcnt.org/vol5-3.html!

\bibitem{Shallit:2013}
J.~Shallit.
\newblock Decidability and enumeration for automatic sequences: a survey.
\newblock In A.~A. Bulatov and A.~M. Shur, editors, {\em CSR 2013}, Vol. 7913
  of {\em Lecture Notes in Computer Science}, pp.  49--63. Springer-Verlag,
  2013.

\bibitem{Tan&Wen:2007}
B.~Tan and Z.-Y. Wen.
\newblock Some properties of the {Tribonacci} sequence.
\newblock {\em European J. Combinatorics} {\bf 28} (2007), 1703--1719.

\bibitem{Turek:2013}
O.~Turek.
\newblock Abelian complexity function of the {Tribonacci} word.
\newblock Preprint, available at {\tt http://arxiv.org/abs/1309.4810}, 2013.

\end{thebibliography}
\end{document}